\documentclass{siamart1116}

\usepackage{color,verbatim}
\usepackage{amsfonts,mathtools,bm}
\usepackage{graphicx,subcaption}

\newsiamremark{example}{Example}
\newsiamremark{remark}{Remark}

\def\eqns#1{\begin{equation*}#1\end{equation*}}
\def\eqnl#1#2{\begin{equation}\label{#1}#2\end{equation}}
\def\eqnsa#1{\begin{subequations}\begin{align*}#1\end{align*}\end{subequations}}
\def\eqnla#1#2{\begin{subequations}\label{#1}\begin{align}#2\end{align}\end{subequations}}

\def\zero{\mathbf{0}}

\def\bse{\bm{e}}
\def\bsk{\bm{k}}
\def\bsn{\bm{n}}
\def\bsN{\bm{N}}
\def\bsm{\bm{m}}
\def\bsr{\bm{r}}
\def\bsx{\bm{x}}
\def\bsalpha{\bm{\alpha}}
\def\bsbeta{\bm{\beta}}

\def\calF{\mathcal{F}}
\def\calI{\mathcal{I}}
\def\calN{\mathcal{N}}

\def\bbE{\mathbb{E}}
\def\bbI{\mathbb{I}}
\def\bbN{\mathbb{N}}
\def\bbP{\mathbb{P}}

\def\d{\mathrm{d}}

\DeclareMathOperator{\supp}{supp}
\DeclareMathOperator{\var}{\textbf{var}}

\def\AND{\qquad\mbox{ and }\qquad}
\def\vphi{\varphi}

\def\defeq{\doteq}
\def\given{\,|\,}
\def\supn#1{|#1|_{\infty}}
\def\st{\,:\,}
\def\dotleq{\mathbin{\dot\leq}}
\def\dotle{\mathbin{\dot<}}

\def\est#1{#1}

\title{Unbiased Multi-index Monte Carlo}

\author{Dan Crisan%
\thanks{Department of Mathematics, Imperial College London, London, SW7 2AZ, UK.}
\and
Pierre Del Moral%
\thanks{INRIA Bordeaux, 200 Avenue de la Vieille Tour, 33405 Talence, FR.}
\and
Jeremie Houssineau%
\thanks{Department of Statistics and Applied Probability, National University of Singapore. Email: \href{mailto:stahje@nus.edu.sg}{stahje@nus.edu.sg}}
\and
Ajay Jasra%
\thanks{Department of Statistics and Applied Probability, National University of Singapore.}
}

\headers{Unbiased Multi-index Monte Carlo}{D.\ Crisan, P.\ Del Moral, J.\ Houssineau and A.\ Jasra}

\begin{document}

\maketitle

\begin{abstract}
We introduce a new class of Monte Carlo based approximations of expectations of random variables such that their laws are only available via certain discretizations. Sampling from the discretized  versions of these laws can typically introduce a bias. In this paper, we show how to remove that bias, by introducing a new version of multi-index Monte Carlo (MIMC) that has the added advantage of reducing the computational effort, relative to i.i.d.~sampling from the most precise discretization, for a given level of error.  We cover extensions of results regarding variance and optimality criteria for the new approach. We apply the methodology to the problem of computing an unbiased  mollified version of the solution of a partial differential equation with random coefficients. A second application concerns the Bayesian inference (the smoothing problem) of an infinite dimensional signal  modelled by the solution of a stochastic partial differential equation that is observed on a discrete space grid and at discrete times. Both applications are complemented by numerical simulations.
\end{abstract}

\begin{keywords}
Monte Carlo, Multi-index discretization, Unbiasedness, Smoothing.
\end{keywords}

\section{Introduction}

Following Kalman's seminal paper \cite{kalman}, the theory and applications of filtering, smoothing and prediction has expanded rapidly and massively and is as strong today as it was seventy years ago. The area has evolved far beyond the original linear/Gaussian framework, many current applications involving nonlinear signals (possibly high dimensional or even infinite dimensional) and nonlinear observations (see \cite{cr} for some of the recent developments). General filtering, smoothing and prediction problems no longer have explicit solutions: inference is achieved mostly through numerical methods. Methods based on Monte Carlo integration are very popular, particularly for high dimensional problems.

Monte Carlo methods have proved to be sensible in a variety of applications. However, in certain contexts, multilevel Monte Carlo (MLMC) integration has been introduced with success,  see e.g. \cite{gile:08, giles:15}. This method is useful when one has a probability law associated to a discretization in a single dimension, for instance time. It can reduce the computational effort, relative to i.i.d.\ sampling (Monte Carlo) from the most precise discretization, for a given level of error. This has been extended in a subtle manner to discretizations in a multiple dimensions, for instance space and time, by \cite{Haji2016}. To be more precise let us assume that we are given a probability measure $\pi$ defined on a measure space $(E, \mathcal{E})$  and a collection of bounded-measurable real-valued functions $\mathcal{B}_b(E)$. We seek to compute for $\vphi \in \mathcal{B}_b(E)$
\eqns{
\bbE[\vphi(X)] = \int_E \vphi(x) \pi(\d x),
}
where $X$ is a random variable with law $\pi$. We assume that the random variable $X$ is associated to a $d$-dimensional multi-index continuum system, such as the one described in \cite{Haji2016}. For instance, $\pi$ may be associated to the solution of a stochastic partial differential equation (SPDE). Such systems are found in a wide variety of applications, see \cite{mcmc_diff} for examples. Alternatively $\pi$  can be the solution of a filtering/smoothing problem as we will explain later on.

Whilst the probability $\pi$ is not available to us, we have access to a class of \emph{biased} approximations $(\pi_{\bsalpha})_{\bsalpha\in\bbN^d_0}$ where $\bbN^d_0$ is the set on multi-indices of length $d$ with integer non-negative entries.  For instance, by using Monte Carlo integration, one can then compute $\bbE[\vphi(X_{\bsalpha})]$ where $X_{\bsalpha}$ being a random variable with law $\pi_{\bsalpha}$ and even though we have a bias, i.e., $\bbE[\vphi(X_{\bsalpha})] \neq \bbE[\vphi(X)]$ we will assume that 
\eqnl{eq:limitCondition}{
\lim_{\bsalpha \to \infty} | \bbE[\vphi(X_{\bsalpha})] - \bbE[\vphi(X)] | = 0,
}
where the limit $\bsalpha \to \infty$ is understood as $\min_{1 \leq i \leq d} \bsalpha_i \to \infty$ and, naturally,  that the computational cost associated with $\pi_{\bsalpha}$ increases as the values of $\bsalpha$ increase.

As an example of the above general context, consider discretely observing data associated to a signal modelled by the solution $(x_t)_{t\ge 0}$ of a stochastic partial differential equation (a concrete example can be found in \cref{ssec:SPDE}). Suppose that data is  obtained at unit time interval and is denoted by $y_1,\dots,y_K$. We assume that,  conditional upon $(x_t)_{t\ge 0}$, $y_1,\dots,y_K$ are conditionally independent with density on a finite dimensional space $g(y_k|x_k)$ with $x_k$ the solution of the SPDE at time $k$. Moreover we assume $g$ is well defined for any $x_k$, even if $x_k$ is discretized. Let $Q_{\bsalpha}$ be the transition density of the SPDE under a time and space discretization corresponding to a multi-index  $\bsalpha\in\bbN^d_0$. Given observed data $y_1,\dots,y_K$ our objective is to compute expectations with respect to the following distribution (known as a smoother or smoothing distribution for $x_{1:K}$): 
\eqns{
\pi_{\bsalpha}(x_{1:K}|y_{1:K}) \propto \prod_{k=1}^K g(y_k|x_k)Q_{\bsalpha}(x_{k}|x_{k-1})
}
where we assume that $x_0$ is given and we used the notation $x_{1:K}=(x_1,\dots,x_K)$ and $y_{1:K}=(y_1,\dots,y_K)$. If the class of discretizations is chosen  so that \cref{eq:limitCondition} holds, then the methodology developed in this paper can be applied to solve this problem. It is worth noting that, integrating with respect to  $\pi$ is non-trivial and well-known to be challenging. See, for instance, the work of \cite{beskos,kantas} and the references therein for more details.

\subsection{Contribution and Structure}

In the context described above, it is well-known that the Monte Carlo approximation of $\bbE[\vphi(X_{\bsalpha})]$, which we assume is required, can be significantly enhanced through the use of the MIMC method of \cite{Haji2016}. This idea is intrinsically linked to the popular MLMC approach. In this latter approach the dimensionality of the index is 1. Briefly writing the indices $1,\dots,L$, ($L$ being the finest discretization and 1 the coarsest and the discretization becomes more and more fine from 1) we have
$$
\bbE[\vphi(X_{L})] = \bbE[\vphi(X_{1})] + \sum_{l=2}^L\{\bbE[\vphi(X_{l})] - \bbE[\vphi(X_{l-1})]\}
$$
that is, introducing a collapsing sum representation of the expectation w.r.t.\ the finest discretization. The idea is then to \emph{dependently} sample a pair of approximations $(\pi_l,\pi_{l-1})$ (i.e~to dependently couple them) independently for $l\in\{2,\dots,L\}$. Note the case $l=1$ is just i.i.d.~sampling from $\pi_1$. If the dependence in the coupling is sufficiently positively correlated, then it is possible to sample fewer simulations at the fine discretizations (which are expensive to sample) and more at the coarse discretizations in such a way that the cost associated to obtain a prescribed mean square error of the MLMC approximation is less than that of i.i.d.\ sampling from the finest discretization. In the MIMC context, it is a more challenging procedure, but similar reductions in computational cost can also be possible.

A randomized version of the MLMC approach has been developed in \cite{Rhee2015}, which removed the discretization bias. In this work, we show how one can extend this idea in the context where the discretization parameters are in multiple dimensions. This extension allows for a judicious allocation of the computational effort in order to take into account the variance of the target distribution discretization in separate dimensions. In particular, Monte Carlo approximations are constructed to entirely remove the discretization bias, that is, to approximate $\bbE[\vphi(X)]$ directly. We also analyze the variance of the methodology and propose several original optimality criteria for its implementation. Several simulated examples are considered. 

This article is structured as follows. In \cref{sec:approach} we give some notations, the approach and some preliminary results. In \cref{sec:theory} our main theoretical results and the corresponding proofs are given. In \cref{sec:numerics} the new methodology is illustrated by numerical examples. \Cref{sec:summary} summarizes our work, with a discussion of future work.

\section{Notation and Preliminary Results}
\label{sec:approach}

Throughout the article, a complete probability space $(\Omega,\calF,\bbP)$ is considered with $\bbE$ denoting the expectation with respect to $\bbP$ and $\bbI_A$ denoting the random variable on $(\Omega,\calF,\bbP)$ defined as the indicator of the event $A \in \calF$. 

We work on the lattice $\bbN_0^d$ for some $d > 0$ equipped with the natural partial order $\leq$ which is defined as $\bsm \leq \bsn$ if and only if $\bsm_i \leq \bsn_i$ for all $1 \leq i \leq d$. Note that different total orders can also be defined on this $d$-dimensional lattice, such as the lexicographical order, but these total orders are to some extent arbitrary and will not be directly useful in the context we consider in this paper. For the sake of simplicity, $S$ and $S_{\bsalpha}$ will denote the random variables $\vphi(X)$ and $\vphi(X_{\bsalpha})$ respectively, for any $\bsalpha \in \bbN^d_0$. Let $\bsm,\bsn \in \bbN_0^d$ such that $\bsm \leq \bsn$ and consider
\eqns{
\calI_{\bsm}^{\bsn} \defeq \{ \bsalpha \in \bbN_0^d \st \bsm_1 \leq \bsalpha_1 \leq \bsn_1, \dots, \bsm_d \leq \bsalpha_d \leq \bsn_d \}.
}
By a usual abuse of notation, $\calI_{\bsm}^n$ is used instead of $\calI_{\bsm}^{\bsn}$ if the superscript $\bsn$ verifies $\bsn_i = n$ for all $1 \leq i \leq d$, and similarly for the subscript. It holds that $\calI^{\infty}_0 = \bbN^d_0$.

An estimator $\est{Z}$ of $\bbE S$ is defined as
\eqnl{eq:coupledSumFullN}{
\est{Z} \defeq \sum_{\bsalpha \in \calI_0^{\bsN}} \dfrac{\Delta S_{\bsalpha}}{\bbP(\bsN \geq \bsalpha)},
}
where $\bsN$ is a random variable on $\bbN_0^d$ independent of $(S_{\bsalpha})_{\bsalpha}$ which guarantees that the estimator is unbiased, i.e.\ it ensures that $\bbE \est{Z} = \bbE S$ holds, and where $\Delta \defeq \Delta_1\dots\Delta_d$ with
\eqns{
\Delta_i S_{\bsalpha} =
\begin{cases*}
S_{\bsalpha} - S_{\bsalpha - \bse_i} & if $\bsalpha_i > 0$ \\
S_{\bsalpha} & if $\bsalpha_i = 0$,
\end{cases*}
}
for any $i \in \{1,\dots,d\}$ where $\bse_i$ is the element of $\bbN^d_0$ such that $(\bse_i)_i = 1$ and $(\bse_i)_j = 0$ if $i \neq j$. The order of the application of the operators $\{\Delta_i\}_{i=1}^d$ in the definition of $\Delta$ is irrelevant since these operators can be easily seen to commute. The choice of a vector-valued random variable $\bsN$ in \cref{eq:coupledSumFullN} is justified by the fact that there might be interest in calculating the sum up to a non-diagonal index. For instance, relying on the increments with some very high and some very low components might yield an estimator with low variance at a reasonable computational cost. The following lemma will be useful to prove that the estimator $\est{Z}$ is unbiased.

\begin{lemma}
\label{lem:increment}
The increment $\Delta S_{\bsalpha}$ can be rewritten for any $\bsalpha \geq 1$ as
\eqnl{eq:lem:increment}{
\Delta S_{\bsalpha} = \sum_{\bsr \in \{0,1\}^d} (-1)^{|\bsr|} S_{\bsalpha-\bsr},
}
with $|\cdot|$ defined as the 1-norm on $\bbN_0^d$.
\end{lemma}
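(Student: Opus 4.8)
The plan is to prove the identity by induction on the dimension $d$, exploiting the product structure $\Delta = \Delta_1\cdots\Delta_d$ and the fact that the operators $\Delta_i$ commute. The base case $d=1$ is immediate: for $\bsalpha \geq 1$ (so $\alpha_1 > 0$) we have $\Delta_1 S_{\bsalpha} = S_{\bsalpha} - S_{\bsalpha - \bse_1}$, which is exactly $\sum_{\bsr\in\{0,1\}} (-1)^{|\bsr|} S_{\bsalpha - \bsr}$.

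For the inductive step, I would write $\Delta = \Delta_d \circ (\Delta_1\cdots\Delta_{d-1})$. Applying the induction hypothesis to the first $d-1$ coordinates gives $(\Delta_1\cdots\Delta_{d-1}) S_{\bsalpha} = \sum_{\bsr'} (-1)^{|\bsr'|} S_{\bsalpha - \bsr'}$, where $\bsr'$ ranges over $\{0,1\}^{d-1}$ embedded into $\bbN_0^d$ with last coordinate zero. Since $\alpha_d > 0$, applying $\Delta_d$ to each term $S_{\bsalpha - \bsr'}$ yields $S_{\bsalpha - \bsr'} - S_{\bsalpha - \bsr' - \bse_d}$. Collecting terms, each $\bsr \in \{0,1\}^d$ arises exactly once, either as $\bsr'$ (with $r_d = 0$, contributing sign $(-1)^{|\bsr'|} = (-1)^{|\bsr|}$) or as $\bsr' + \bse_d$ (with $r_d = 1$, contributing sign $-(-1)^{|\bsr'|} = (-1)^{|\bsr'|+1} = (-1)^{|\bsr|}$), which gives the claimed formula. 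Alternatively, and perhaps more cleanly, one can avoid induction entirely by noting that each operator $\mathrm{Id} - T_i$ (where $T_i$ is the shift $S_{\bsalpha}\mapsto S_{\bsalpha-\bse_i}$, valid here since all components of $\bsalpha$ are at least $1$) commutes with the others, so $\Delta = \prod_{i=1}^d (\mathrm{Id} - T_i)$ and expanding this product of commuting operators directly gives $\sum_{\bsr\in\{0,1\}^d} (-1)^{|\bsr|} \prod_{i\st r_i=1} T_i = \sum_{\bsr\in\{0,1\}^d}(-1)^{|\bsr|} S_{\bsalpha-\bsr}$, since $|\bsr|$ counts exactly the number of shift operators applied.

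The only point requiring mild care — and the closest thing to an obstacle — is ensuring that the restriction $\bsalpha \geq 1$ is used at every stage: it guarantees that whenever we invoke the ``$\bsalpha_i > 0$'' branch of the definition of $\Delta_i$, the relevant index $\bsalpha - \bsr$ (for any $\bsr \in \{0,1\}^d$) still lies in $\bbN_0^d$, so all the symbols $S_{\bsalpha - \bsr}$ are well-defined and no boundary terms of the ``$\bsalpha_i = 0$'' type intrude. Given this, the argument is essentially a bookkeeping exercise on the $2^d$ subsets of $\{1,\dots,d\}$, and I would present the shift-operator version as the main line with at most a sentence pointing to the inductive alternative.
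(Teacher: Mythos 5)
Your proof is correct and essentially matches the paper's: the paper proves the identity by exactly the induction on dimension you describe (apply the last $\Delta_i$ to the sum given by the induction hypothesis and merge the two sign classes). Your preferred shift-operator phrasing, expanding $\Delta=\prod_{i=1}^d(\mathrm{Id}-T_i)$ for commuting shifts, is just a compact repackaging of the same bookkeeping, and your remark that $\bsalpha\geq 1$ is needed so every $S_{\bsalpha-\bsr}$ invokes only the ``$\bsalpha_i>0$'' branch is the right point of care.
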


\begin{proof}
This result can be proved by recurrence on the dimension $d$. The case $d=1$ is obvious. If \cref{eq:lem:increment} is assumed to hold for a given $d \in \bbN$ then for any $\alpha' \in \bbN$,
\eqnsa{
\Delta_{d+1} \Delta_d \dots \Delta_1 S_{(\bsalpha,\alpha')} & = \Delta_d \dots \Delta_1 S_{(\bsalpha,\alpha')} - \Delta_d \dots \Delta_1 S_{(\bsalpha,\alpha'-1)} \\
& = \sum_{\bsr \in \{0,1\}^d} \sum_{r' \in \{0,1\}} (-1)^{|\bsr| + r'} S_{(\bsalpha-\bsr,\alpha'-r')} \\
& = \sum_{\bsr \in \{0,1\}^{d+1}} (-1)^{|\bsr|} S_{\bsbeta-\bsr}
}
where $\bsbeta = (\bsalpha,\alpha')$, hence showing that the relation is true for the dimension $d+1$.
\end{proof}

It follows from \cref{lem:increment} that a given term $S_{\bsalpha}$ is going to appear exactly once in each of the increments $\Delta S_{\bsalpha+\bsr}$ with $\bsr \in \{0,1\}^d$, that is $|\{0,1\}^d| = 2^d$ times, negatively in $2^{d-1}$ of the increments and positively in the other $2^{d-1}$ increments, therefore cancelling in $\est{Z}$. However, this does not take into account cases where the condition $\bsalpha \geq 1$ is not satisfied and is only valid for terms $S_{\bsalpha}$ for which all the $\{S_{\bsalpha+\bsr}\}_{\bsr \in \{0,1\}^d}$ are included in the considered sum.

\begin{lemma}
\label{lem:sumIncrements}
For any $k,n\in \bbN_0$ such that $n > k$, it holds that
\eqns{
\sum_{\bsalpha \in \calI_{k+1}^n} \Delta S_{\bsalpha} = \sum_{\bsalpha \in \{k,n\}^d} (-1)^{\ell_k(\bsalpha)} S_{\bsalpha},
}
where $\ell_{k}(\bsalpha)$ is the number of components of $\bsalpha$ equal to $k$. 
\end{lemma}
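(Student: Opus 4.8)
The plan is to use Lemma~\ref{lem:increment} to expand each increment $\Delta S_{\bsalpha}$ over the summation domain $\calI_{k+1}^n$ and then track, for each fixed multi-index $\bsbeta$, the total signed multiplicity with which $S_{\bsbeta}$ appears in the double sum $\sum_{\bsalpha\in\calI_{k+1}^n}\sum_{\bsr\in\{0,1\}^d}(-1)^{|\bsr|}S_{\bsalpha-\bsr}$. Writing $\bsbeta=\bsalpha-\bsr$, a term $S_{\bsbeta}$ arises precisely from those pairs $(\bsalpha,\bsr)$ with $\bsalpha\in\calI_{k+1}^n$ and $\bsalpha-\bsr=\bsbeta$, i.e.\ $\bsr\in\{0,1\}^d$ with $\bsbeta+\bsr\in\calI_{k+1}^n$. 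So the coefficient of $S_{\bsbeta}$ is $\sum_{\bsr}(-1)^{|\bsr|}$ over this constrained set of $\bsr$'s, and the whole argument reduces to a coordinatewise counting of which $\bsr$ are admissible.

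The key observation is that the admissibility constraint $k+1\le \bsbeta_i+\bsr_i\le n$ decouples across coordinates $i\in\{1,\dots,d\}$, so the coefficient of $S_{\bsbeta}$ factors as $\prod_{i=1}^d c_i(\bsbeta_i)$ where $c_i(\bsbeta_i)=\sum_{\bsr_i\in\{0,1\},\,k+1\le\bsbeta_i+\bsr_i\le n}(-1)^{\bsr_i}$. For a fixed coordinate value $m=\bsbeta_i$, one checks the cases: if $k+1\le m\le n-1$ then both $\bsr_i=0$ and $\bsr_i=1$ are admissible and $c_i=1-1=0$; if $m=n$ then only $\bsr_i=0$ works, giving $c_i=1$; if $m=k$ then only $\bsr_i=1$ works, giving $c_i=-1$; and if $m<k$ or $m>n$ then neither works, giving $c_i=0$. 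Hence the product is nonzero exactly when every coordinate of $\bsbeta$ lies in $\{k,n\}$, i.e.\ $\bsbeta\in\{k,n\}^d$, and in that case it equals $(-1)^{\ell_k(\bsbeta)}$ since each coordinate equal to $k$ contributes a factor $-1$ and each coordinate equal to $n$ contributes $+1$. Summing $S_{\bsbeta}$ against these coefficients over all $\bsbeta$ yields exactly $\sum_{\bsalpha\in\{k,n\}^d}(-1)^{\ell_k(\bsalpha)}S_{\bsalpha}$.

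One technical point to handle carefully is the hypothesis $\bsalpha\ge 1$ required by Lemma~\ref{lem:increment}: since $n>k\ge 0$ we have $k+1\ge 1$, so every $\bsalpha\in\calI_{k+1}^n$ satisfies $\bsalpha\ge\one$ and the expansion applies to every term in the sum without exception. The main obstacle — though it is really bookkeeping rather than a genuine difficulty — is being precise that the range of $\bsbeta$ in the expanded double sum is $\calI_k^n$ (each coordinate drops by at most one from $[k+1,n]$), that outside this box all coefficients vanish anyway, and that the coordinatewise factorization of the coefficient is legitimate because the constraint on $\bsr$ is a product of one-dimensional constraints. An alternative, equally clean route is a direct induction on $d$ mirroring the proof of Lemma~\ref{lem:increment}: the one-dimensional identity $\sum_{\alpha=k+1}^n\Delta S_\alpha=S_n-S_k$ is the base case, and the inductive step uses $\Delta_{d+1}$ telescoping together with the identity for the first $d$ coordinates; I would present the factorization argument as the primary proof since it makes the appearance of $\ell_k$ most transparent.
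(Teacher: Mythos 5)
Your proposal is correct and takes essentially the same route as the paper: expand each $\Delta S_{\bsalpha}$ via \cref{lem:increment} and count the signed multiplicity with which each $S_{\bsbeta}$ occurs, so that only the corner terms $\bsbeta\in\{k,n\}^d$ survive with sign $(-1)^{\ell_k(\bsbeta)}$. The only difference is bookkeeping — you factorize the coefficient coordinatewise, while the paper partitions the terms by the set of coordinates lying strictly between $k$ and $n$ and cancels in pairs — and both handle the $\bsalpha\geq\one$ requirement the same way.
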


\begin{proof}
From \cref{lem:increment}, it holds that
\eqns{
\sum_{\bsalpha \in \calI_{k+1}^n} \Delta S_{\bsalpha} = \sum_{\bsr \in \{0,1\}^d} (-1)^{|\bsr|} \sum_{\bsalpha \in \calI_{k+1}^n} S_{\bsalpha-\bsr}.
}
The inner sum on the r.h.s.\ can be written as
\eqns{
\sum_{\bsalpha \in \calI_{k+1}^n} S_{\bsalpha-\bsr} = \sum_{I \subseteq \{1,\dots,d\}} \sum_{\substack{\bsalpha \in \calI_{k-\bsr+1}^{n-\bsr} \\ k < \bsalpha_i < n \iff i \in I}} S_{\bsalpha},
}
for any $\bsr \in \{0,1\}^d$. Denoting $|I|$ the cardinality of a set $I$, it follows that the terms $S_{\bsalpha}$ corresponding to a non-empty subset $I$ of $\{1,\dots,d\}$ appear $2^{|I|}$ times, $2^{|I|-1}$ times positively and $2^{|I|-1}$ times negatively, therefore cancelling out. The remaining terms are the ones corresponding to $I = \emptyset$ since they only appear $2^{|I|} = 1$ time. It follows that
\eqns{
\sum_{\bsalpha \in \calI_{k+1}^n} \Delta S_{\bsalpha} = \sum_{\bsr \in \{0,1\}^d} (-1)^{|\bsr|} S_{t_{k,n}(\bsr)},
}
where $t_{k,n} : \{0,1\}^d \to \{k,n\}^d$ is characterised by $(t_{k,n}(\bsr))_i = k^{\bsr_i}n^{1-\bsr_i}$ for any $i \in \{1,\dots,d\}$ and any $\bsr \in \{0,1\}^d$. The result of the lemma follows by a change of variable $\bsr \to \bsalpha = t_{k,n}(\bsr)$ in the sum on the r.h.s.\ and by verifying that
\eqns{
(t^{-1}_{k,n}(\bsalpha))_i = \frac{n-\bsalpha_i}{n-k} 
}
holds for any $i \in \{0,\dots,d\}$, so that $\big| t^{-1}_{k,n}(\bsalpha) \big| = \ell_k(\bsalpha)$. 

\end{proof}

\section{Main Theoretical Results}
\label{sec:theory}

We now consider several theoretical results for our approach, which justify its practical implementation.

\subsection{Unbiasedness}

\Cref{lem:sumIncrements} still does not apply to sums of increments containing indices $\bsalpha$ that do not verify $\bsalpha \geq 1$. Removing this last restriction leads in the following theorem.

\begin{theorem}
The estimator $\est{Z}$ is unbiased.
\end{theorem}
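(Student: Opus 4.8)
The plan is to show $\bbE \est{Z} = \bbE S$ by first conditioning on $\bsN$, then using linearity of expectation together with independence of $\bsN$ and $(S_{\bsalpha})_{\bsalpha}$, and finally showing that the telescoping/cancellation established in \cref{lem:sumIncrements} extends to the case where the lower index reaches $\zero$. Concretely, write
\eqns{
\bbE \est{Z} = \bbE \left[ \sum_{\bsalpha \in \calI_0^{\bsN}} \dfrac{\Delta S_{\bsalpha}}{\bbP(\bsN \geq \bsalpha)} \right] = \sum_{\bsalpha \in \bbN_0^d} \dfrac{\bbP(\bsN \geq \bsalpha)}{\bbP(\bsN \geq \bsalpha)} \, \bbE[\Delta S_{\bsalpha}] = \sum_{\bsalpha \in \bbN_0^d} \bbE[\Delta S_{\bsalpha}],
}
where the appearance of $\bbP(\bsN \geq \bsalpha)$ comes from $\bbE[\bbI_{\bsalpha \in \calI_0^{\bsN}}] = \bbP(\bsN \geq \bsalpha)$ and the interchange of sum and expectation is justified by the absolute summability assumed implicitly on the increments (this is where one invokes Fubini/dominated convergence; in the paper's setting the tail conditions on $\Delta S_{\bsalpha}$ and on $\bsN$ make the doubly-indexed sum absolutely convergent). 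Thus it suffices to prove that the (formally infinite) sum $\sum_{\bsalpha \in \bbN_0^d} \Delta S_{\bsalpha}$ collapses to $S$ in an appropriate limiting sense.

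Next I would compute the partial sum $\sum_{\bsalpha \in \calI_0^n} \Delta S_{\bsalpha}$ over a diagonal box and show it equals $S_{(n,\dots,n)}$, from which the result follows by letting $n \to \infty$ and using \cref{eq:limitCondition} (together with absolute summability to pass the limit through the expectation). To get the partial sum I would split $\calI_0^n = \calI_1^n \cup (\calI_0^n \setminus \calI_1^n)$: on $\calI_1^n$ the sum of increments is handled by \cref{lem:sumIncrements} with $k = 0$, giving $\sum_{\bsalpha \in \{0,n\}^d} (-1)^{\ell_0(\bsalpha)} S_{\bsalpha}$. The remaining indices are those with at least one zero component, i.e.\ lying on the ``boundary faces'' of the box, and for these the cancellation observed after \cref{lem:increment} fails precisely because some of the shifted terms $S_{\bsalpha - \bsr}$ would have a negative component and are instead replaced by $S_{\bsalpha}$ (that is the effect of the $\Delta_i$ definition at $\bsalpha_i = 0$). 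The key bookkeeping step is therefore to account carefully for these boundary contributions and to check that, once combined with the $\ell_0(\bsalpha) > 0$ terms produced by \cref{lem:sumIncrements}, everything cancels except the single term $S_{(n,\dots,n)}$ corresponding to $\ell_0 = 0$.

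A clean way to organize this is to avoid separating boundary from interior and instead redo the telescoping argument directly: expand each $\Delta S_{\bsalpha}$ over $\calI_0^n$ using the convention-aware definition of $\Delta$, i.e.\ observe that $\Delta_i$ acts as $S \mapsto (\mathrm{id} - \tau_i) S$ with $\tau_i$ the shift $\tau_i S_{\bsalpha} = S_{\bsalpha - \bse_i}$ for $\bsalpha_i > 0$ but $\tau_i S_{\bsalpha} = 0$ for $\bsalpha_i = 0$; then $\Delta = \prod_i (\mathrm{id} - \tau_i)$ and $\sum_{\bsalpha \in \calI_0^n} \Delta S_{\bsalpha} = \prod_{i=1}^d \big( \sum_{\alpha_i = 0}^{n} (\mathrm{id} - \tau_i) \big) S$, and each one-dimensional factor $\sum_{\alpha_i=0}^n (\mathrm{id} - \tau_i)$ telescopes — with the $\alpha_i = 0$ boundary term vanishing under the $\tau_i$ part — to ``evaluate the $i$-th index at $n$''. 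Multiplying the $d$ factors yields exactly $S_{(n,\dots,n)}$. I expect the main obstacle to be neither the telescoping itself nor the final limit, but making the interchange of infinite summation and expectation fully rigorous: one must state or cite the integrability/tail hypotheses on $(S_{\bsalpha})$ and on $\bsN$ that guarantee $\sum_{\bsalpha} \bbE|\Delta S_{\bsalpha}|/\bbP(\bsN \ge \bsalpha) < \infty$ (or at least that the relevant rearrangements and limits are valid), since without such a hypothesis the ``unbiasedness'' claim is only formal.
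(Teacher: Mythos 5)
Your proposal is correct, and its key combinatorial step takes a genuinely different route from the paper's. The paper also truncates to $\est{Z}_n$, uses independence of $\bsN$ from $(S_{\bsalpha})_{\bsalpha}$ to reduce to $\sum_{\bsalpha\in\calI_0^n}\bbE\Delta S_{\bsalpha}$, and then proves this equals $\bbE S_{(n,\dots,n)}$ by splitting the indices according to which coordinates are positive, applying \cref{lem:sumIncrements} to each piece, and counting multiplicities via the alternating binomial sum $\sum_{l}\binom{d-s(\bsalpha)}{l}(-1)^l$. You instead absorb the boundary convention into shift operators $\tau_i$ (with $\tau_iS_{\bsalpha}=0$ when $\bsalpha_i=0$), write $\Delta=\prod_{i}(\mathrm{id}-\tau_i)$, and telescope coordinate-by-coordinate over the box, obtaining the pathwise identity $\sum_{\bsalpha\in\calI_0^n}\Delta S_{\bsalpha}=S_{(n,\dots,n)}$ directly; this bypasses \cref{lem:sumIncrements} and the binomial bookkeeping entirely, and yields the cancellation at the level of the random variables rather than only in expectation. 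Both arguments then conclude identically via \cref{eq:limitCondition}. Your route is shorter and cleaner for this theorem; the paper's route has the incidental benefit of reusing \cref{lem:sumIncrements}, which it needs anyway elsewhere. Your caution about interchanging the infinite sum with the expectation is well placed: the paper's proof likewise passes from $\bbE\est{Z}_n=\bbE S_n$ to $\bbE\est{Z}=\bbE S$ without explicitly justifying $\bbE\est{Z}_n\to\bbE\est{Z}$ (implicitly one relies on conditions of the type in \cref{eq:prop:varianceBarZ:condition}). One small refinement: by independence the indicator's expectation cancels the denominator, so absolute summability only requires $\sum_{\bsalpha}\bbE|\Delta S_{\bsalpha}|<\infty$; your stated condition with the extra factor $1/\bbP(\bsN\geq\bsalpha)$ is sufficient but stronger than necessary.
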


\begin{proof}
Let $\est{Z}_n$ be a partial version of the estimator $\est{Z}$ defined as
\eqns{
\est{Z}_n \defeq \sum_{\bsalpha \in \calI_0^{\bsN\land n}} \dfrac{\Delta S_{\bsalpha}}{\bbP(\bsN \geq \bsalpha)} = \sum_{\bsalpha \in \calI_0^n} \Delta S_{\bsalpha}\dfrac{\bbI_{\{\bsN \geq \bsalpha\}}}{\bbP(\bsN \geq \bsalpha)},
}
Because of the independence between $\bsN$ and the $\{S_{\bsalpha}\}_{\bsalpha}$, the estimator $\est{Z}_n$ satisfies
\eqns{
\bbE \est{Z}_n = \sum_{\bsalpha \in \calI_0^n} \bbE\Delta S_{\bsalpha} \bbE\bigg[\dfrac{\bbI_{\{\bsN \geq \bsalpha\}}}{\bbP(\bsN \geq \bsalpha)} \bigg]
= \sum_{\bsalpha \in \calI_0^n} \bbE\Delta S_{\bsalpha},
}
which can be further expressed as
\eqns{
\bbE \est{Z}_n = \sum_{I \subseteq \{1,\dots,d\}} \sum_{\substack{\bsalpha \in \calI_0^n \\ \bsalpha_i > 0 \iff i \in I}} \bbE\Delta_I S_{\bsalpha},
}
where, defining $k \defeq |I|$ and denoting $I = \{i_1,\dots,i_k\}$, the operator $\Delta_I$ is defined as $\Delta_I = \Delta_{i_1}\dots \Delta_{i_k}$. In the case $I = \emptyset$, the inner sum is equal to $S_{\zero}$. Using \cref{lem:sumIncrements}, it follows that
\eqns{
\bbE \est{Z}_n = \sum_{I \subseteq \{1,\dots,d\}} \sum_{\substack{\bsalpha \in \{0,n\}^d \\ \bsalpha_i > 0 \implies i \in I}} (-1)^{|(n-\bsalpha_I)/n|} \bbE S_{\bsalpha},
}
where, considering $\bsalpha$ as a function from $\{1,\dots,d\}$ to $\{0,\dots,n\}$, $\bsalpha_I$ denotes the restriction of $\bsalpha$ to the set $I$. Therefore, for any $\bsalpha \in \{0,1\}^d$, the term $\bbE S_{\bsalpha}$ appears once in the inner sum whenever the support $\supp(\bsalpha)$ of $\bsalpha$ is included in~$I$. Denoting $s(\bsalpha) \defeq |\supp(\bsalpha)|$, it holds that if $|I \setminus \supp(\bsalpha)| = l$ then $\bbE S_{\bsalpha}$ appears $\binom{d-s(\bsalpha)}{l}$ times, positively if $l$ is even and negatively if $l$ is odd. It follows that
\eqnsa{
\bbE \est{Z}_n & = \sum_{\bsalpha \in \{0,n\}^d} \bbE S_{\bsalpha} \bigg( \sum_{l=0}^{d-s(\bsalpha)} \binom{d-s(\bsalpha)}{l} (-1)^l \bigg) \\
& = \bbE S_n,
}
since the binomial formula in the first line differs from zero only when $\bsalpha \in \{0,n\}^d$ verifies $s(\bsalpha) = d$, that is when $\bsalpha = (n,n,\dots)$. The desired result follows by taking the limit under the condition stated in \cref{eq:limitCondition}.
\end{proof}

\subsection{Variance of the unbiased estimator}

Being able to determine the variance of the unbiased estimator will be important when looking for an optimal distribution for the random variable $\bsN$. We give expressions of the variance of $\est{Z}$ as well as for useful special cases. An additional notation is required for the statement of the following proposition: $\bsalpha \lor \bsbeta$ denotes the component-wise maximum of any $\bsalpha$ and $\bsbeta$ in $\bbN^d_0$, that is
\eqns{
\bsalpha \lor \bsbeta = (\bsalpha_1 \lor \bsbeta_1, \dots, \bsalpha_d \lor \bsbeta_d).
}

\begin{proposition}
\label{prop:varianceBarZ}
Assuming that
\eqnl{eq:prop:varianceBarZ:condition}{
\sum_{\bsalpha,\bsbeta \in \calI_0^{\infty}} \dfrac{\Delta \| S_{\bsalpha} - S \|_2 \Delta \| S_{\bsbeta} - S \|_2 }{\bbP(\bsN \geq \bsalpha \lor \bsbeta)} < \infty,
}
the second moment of $\est{Z}$ exists and is found to be
\eqnl{eq:generalSecondMoment}{
\bbE \est{Z}^2 = \sum_{\bsalpha,\bsbeta \in \calI_0^{\infty}} \est\nu_{\bsalpha,\bsbeta} \dfrac{\bbP(\bsN \geq \bsalpha \lor \bsbeta)}{\bbP(\bsN \geq \bsalpha) \bbP(\bsN \geq \bsbeta)}.
}
where $\est\nu_{\bsalpha,\bsbeta} = \bbE[\Delta S_{\bsalpha} \Delta S_{\bsbeta}]$.
\end{proposition}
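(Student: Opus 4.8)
The plan is to expand the square of the truncated estimator $\est{Z}_n$, take expectations using the independence of $\bsN$ from $(S_{\bsalpha})_{\bsalpha}$, and then pass to the limit $n\to\infty$ justified by the summability hypothesis \cref{eq:prop:varianceBarZ:condition}. First I would write
\eqns{
\est{Z}_n^2 = \sum_{\bsalpha,\bsbeta \in \calI_0^n} \Delta S_{\bsalpha} \Delta S_{\bsbeta} \dfrac{\bbI_{\{\bsN \geq \bsalpha\}} \bbI_{\{\bsN \geq \bsbeta\}}}{\bbP(\bsN \geq \bsalpha)\bbP(\bsN \geq \bsbeta)},
}
and observe that $\bbI_{\{\bsN \geq \bsalpha\}} \bbI_{\{\bsN \geq \bsbeta\}} = \bbI_{\{\bsN \geq \bsalpha \lor \bsbeta\}}$ since the partial order is component-wise. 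Using independence, $\bbE[\bbI_{\{\bsN \geq \bsalpha \lor \bsbeta\}}] = \bbP(\bsN \geq \bsalpha \lor \bsbeta)$, and taking expectations term by term yields
\eqns{
\bbE \est{Z}_n^2 = \sum_{\bsalpha,\bsbeta \in \calI_0^n} \est\nu_{\bsalpha,\bsbeta} \dfrac{\bbP(\bsN \geq \bsalpha \lor \bsbeta)}{\bbP(\bsN \geq \bsalpha)\bbP(\bsN \geq \bsbeta)},
}
which is exactly the partial sum of the claimed series.

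It then remains to show that the limit as $n\to\infty$ may be taken inside the sum, and that the limiting series is absolutely convergent. Here I would rewrite each increment as a telescoping difference of centred quantities, $\Delta S_{\bsalpha} = \Delta (S_{\bsalpha} - S)$ (legitimate because $\Delta$ annihilates the constant $S$, being a product of first differences in each coordinate), so that $|\Delta S_{\bsalpha}| \le \Delta \| S_{\bsalpha} - S\|$ in an appropriate sense after expanding via \cref{lem:increment}; more precisely one bounds $|\est\nu_{\bsalpha,\bsbeta}| = |\bbE[\Delta S_{\bsalpha}\Delta S_{\bsbeta}]|$ by the Cauchy--Schwarz inequality and a triangle-inequality expansion of the increment operator, obtaining a majorant of the form $\Delta \| S_{\bsalpha} - S\|_2 \,\Delta \| S_{\bsbeta} - S\|_2$ (up to combinatorial constants depending only on $d$). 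Dividing by $\bbP(\bsN \geq \bsalpha \lor \bsbeta)$ and noting $\bbP(\bsN \geq \bsalpha)\bbP(\bsN \geq \bsbeta) \geq \bbP(\bsN \geq \bsalpha \lor \bsbeta)^2$ shows the general term is dominated by the summand in \cref{eq:prop:varianceBarZ:condition}. Dominated convergence (for series) then gives both the existence of $\bbE \est{Z}^2$ and the identity \cref{eq:generalSecondMoment}, using that $\est{Z}_n \to \est{Z}$ a.s.\ and in $L^2$.

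The main obstacle I anticipate is the bookkeeping in the bound $|\est\nu_{\bsalpha,\bsbeta}| \lesssim \Delta \| S_{\bsalpha} - S \|_2 \Delta \| S_{\bsbeta} - S \|_2$: one must be careful that $\Delta$ applied to the norm $\|S_{\bsalpha}-S\|_2$ is itself a signed combination over $\bsr\in\{0,1\}^d$ and need not be nonnegative, so the comparison with the hypothesis is really a comparison of the expanded $2^d\times 2^d$ double sums term by term, controlling $|\bbE[(S_{\bsalpha-\bsr}-S)(S_{\bsbeta-\bsr'}-S)]| \le \|S_{\bsalpha-\bsr}-S\|_2\|S_{\bsbeta-\bsr'}-S\|_2$ and then reassembling. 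The factor $\bbP(\bsN \geq \bsalpha\lor\bsbeta)/(\bbP(\bsN\geq\bsalpha)\bbP(\bsN\geq\bsbeta))$ must be handled uniformly; the inequality $\bbP(\bsN\geq\bsalpha)\bbP(\bsN\geq\bsbeta)\ge\bbP(\bsN\geq\bsalpha\lor\bsbeta)^2$ (equivalently $\bbP(\bsN\geq\bsalpha\lor\bsbeta)\le\bbP(\bsN\geq\bsalpha\land\bsbeta)\ \cdot$ something, but the stated one suffices) makes the ratio $\le 1/\bbP(\bsN\geq\bsalpha\lor\bsbeta)$, which is the form appearing in the hypothesis. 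Once these two estimates are in place the dominated-convergence step is routine.
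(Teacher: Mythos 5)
Your skeleton is the same as the paper's: truncate to $\est{Z}_n$, use $\bbI_{\{\bsN\geq\bsalpha\}}\bbI_{\{\bsN\geq\bsbeta\}}=\bbI_{\{\bsN\geq\bsalpha\lor\bsbeta\}}$ together with the independence of $\bsN$ from $(S_{\bsalpha})_{\bsalpha}$ to obtain the partial double sum, recentre via $\Delta S_{\bsalpha}=\Delta(S_{\bsalpha}-S)$, invoke $\bbP(\bsN\geq\bsalpha)\bbP(\bsN\geq\bsbeta)\geq\bbP(\bsN\geq\bsalpha\lor\bsbeta)^2$, and pass to the limit. The paper phrases the limit passage as a Cauchy-in-$L^2$ argument for $(\est{Z}_n)_n$ rather than dominated convergence, and in fact only proves finiteness of the second moment that way, the identity \cref{eq:generalSecondMoment} being left implicit as a consequence of $L^2$ convergence combined with exactly the partial-sum computation you wrote down; on that score your proposal is, if anything, more explicit about the claimed formula.

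The gap is in your domination step. For $\bsalpha,\bsbeta\geq 1$, Cauchy--Schwarz applied to the expansion from \cref{lem:increment} gives $|\est\nu_{\bsalpha,\bsbeta}|\leq\sum_{\bsr,\bsr'\in\{0,1\}^d}\|S_{\bsalpha-\bsr}-S\|_2\,\|S_{\bsbeta-\bsr'}-S\|_2$, an unsigned sum. The summand in \cref{eq:prop:varianceBarZ:condition} is the alternating combination $\Delta\|S_{\bsalpha}-S\|_2\,\Delta\|S_{\bsbeta}-S\|_2$, whose absolute value is at most, not at least, that unsigned sum; so ``reassembling'' the termwise Cauchy--Schwarz bounds produces a majorant that is larger than the hypothesis' summand, and the inequality you need points the wrong way. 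No constant depending only on $d$ repairs this: $\Delta\|S_{\bsalpha}-S\|_2$ can vanish or be negative while $\est\nu_{\bsalpha,\bsbeta}\neq 0$. The comparison is only meaningful under additional structure, e.g.\ $d=1$, where the telescoping is exact and no Cauchy--Schwarz is needed (\cref{rem:oneDimensional}), or under a monotonicity assumption on $\bsalpha\mapsto\|S_{\bsalpha}-S\|_2$ making the signed increments nonnegative so that \cref{eq:prop:varianceBarZ:condition} can be read as a genuine absolute bound. To be fair, the paper's own proof takes this step just as tersely, so your instinct that this is ``the main obstacle'' is correct, but your proposed resolution does not close it. Two smaller points: the recentring $\Delta S_{\bsalpha}=\Delta(S_{\bsalpha}-S)$ holds only for $\bsalpha\geq 1$, since at boundary indices the convention $\Delta_i S_{\bsalpha}=S_{\bsalpha}$ does not annihilate constants (harmless for tail estimates, but not valid over all of $\calI_0^{\infty}$ as stated); and the $L^2$ convergence $\est{Z}_n\to\est{Z}$ you invoke at the end is itself what the tail estimate must deliver (a.s.\ convergence is trivial because $\est{Z}_n=\est{Z}$ once $n\geq\supn{\bsN}$), so it should be derived from the Cauchy estimate rather than assumed alongside it.
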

\begin{remark}
The condition stated in \cref{eq:prop:varianceBarZ:condition} will hold if the probability $\bbP(\bsN \geq \bsalpha \lor \bsbeta)$ decreases sufficiently slowly when compared with the discretization error $\Delta \| S_{\bsalpha} - S \|_2 \Delta \| S_{\bsbeta} - S \|_2$. For instance, when solving a partial differential equation, the tail of the distribution of $\bsN$ should not be smaller than the decay of the error associated with the refinement of the mesh.
\end{remark}
\begin{proof}
In order to study the variance of the estimator, consider
\eqns{
\|\est{Z}_n - \est{Z}_k\|^2_2 = \sum_{\bsalpha,\bsbeta \in \calI_{k+1}^n} \bbE[\Delta S_{\bsalpha} \Delta S_{\bsbeta}] \dfrac{\bbP(\bsN \geq \bsalpha \lor \bsbeta)}{\bbP(\bsN \geq \bsalpha) \bbP(\bsN \geq \bsbeta)},
}
where $\|\cdot\|_2$ is the $L^2$-norm. For the same reasons as before, it can be verified that $\Delta S_{\bsalpha} = \Delta(S_{\bsalpha} - S)$ holds by adding and subtracting $2^{d-1}$ times the random variable $S$. It follows that
\eqns{
\|\est{Z}_n - \est{Z}_k\|^2_2 \leq \sum_{\bsalpha,\bsbeta \in \calI_{k+1}^n} \dfrac{\Delta \| S_{\bsalpha} - S \|_2 \Delta \| S_{\bsbeta} - S \|_2 }{\bbP(\bsN \geq \bsalpha \lor \bsbeta)},
}
where the inequality $\bbP(\bsN \geq \bsalpha) \bbP(\bsN \geq \bsbeta) \geq \bbP(\bsN \geq \bsalpha \lor \bsbeta)^2$, which holds for any $\bsalpha,\bsbeta \in \bbN^d_0$, has been used. Assuming that
\eqns{
\sum_{\bsalpha,\bsbeta \in \calI_0^{\infty}} \dfrac{\Delta \| S_{\bsalpha} - S \|_2 \Delta \| S_{\bsbeta} - S \|_2 }{\bbP(\bsN \geq \bsalpha \lor \bsbeta)} < \infty,
}
it follows that $\|\est{Z}_n - \est{Z}_k\|^2_2$ can be made arbitrarily small by considering $k$ large enough, so that $(\est{Z}_n)_n$ is a Cauchy sequence which therefore converges in the Hilbert space $L^2$, so that the second moment of $\est{Z}$ is finite. This completes the proof of the proposition.
\end{proof}

\begin{remark}
By considering a total order $\dotleq$ on $\bbN_0^d$ such as the lexicographical order, dual sums over $\bsalpha$ and $\bsbeta$ in some given subset of $\bbN^d_0$ could be split into diagonal and non-diagonal elements, the latter being simplified to terms verifying $\bsalpha \dotleq \bsbeta$. However, this would not allow for simplifying the indicator function $\bbI_{\{\bsN \geq \bsalpha \lor \bsbeta\}}$ since $\bsalpha \dotleq \bsbeta$ does not imply that $\bsalpha \lor \bsbeta = \bsbeta$ in general.
\end{remark}

\begin{remark}
\label{rem:oneDimensional}
The case $d = 1$ has been studied in \cite[Theorem~1]{Rhee2015} and yields an expression of the second moment $\bbE \est{Z}^2$ which simplifies drastically and which can be expressed as a single sum as
\eqns{
\bbE \est{Z}^2 = \sum_{\bsalpha \geq 0} \dfrac{ \|S_{\bsalpha-1} - S\|^2_2 - \|S_{\bsalpha} - S\|^2_2 }{\bbP(\bsN \geq \bsalpha)},
}
where $\bsalpha$ and $\bsN$ are now integers. The condition for the existence of $\bbE \est{Z}^2$ reduces to
\eqns{
\sum_{\bsalpha \geq 1} \dfrac{\| S_{\bsalpha-1} - S \|_2^2}{ \bbP(\bsN \geq \bsalpha) }
}
for $d = 1$, which is much simpler to verify than \cref{eq:prop:varianceBarZ:condition}.
\end{remark}

The random variable $\bsN$ can be chosen in such a way as to simplify \cref{eq:generalSecondMoment}: If the components of $\bsN$ are assumed to be independent random variables, then
\eqnsa{
\bbE \est{Z}^2 & = \sum_{\bsalpha,\bsbeta \in \calI_0^{\infty}} \est\nu_{\bsalpha,\bsbeta} \dfrac{\prod_{i=1}^d \bbP(\bsN_i \geq \bsalpha_i \lor \bsbeta_i)}{\prod_{i=1}^d \bbP(\bsN_i \geq \bsalpha_i) \bbP(\bsN \geq \bsbeta_i)} \\
& = \sum_{\bsalpha,\bsbeta \in \calI_0^{\infty}} \dfrac{\est\nu_{\bsalpha,\bsbeta}}{\bbP(\bsN \geq \bsalpha \land \bsbeta)}.
}
However, this expression of the second moment is still more complicated than in the case $d=1$ detailed in \cref{rem:oneDimensional} as it involves a double sum. Yet, another special case of the estimator $\est{Z}$ can be obtained by assuming that $\bsN$ is a constant function, i.e.\ that $\bsN_i = \bsN_j$ almost surely for any $i,j \in \{1,\dots,d\}$. This estimator will be denoted $\est{Z}'$ and is expressed as follows
\eqns{
\est{Z}' = \sum_{\bsalpha \in \calI_0^N} \dfrac{\Delta S_{\bsalpha}}{\bbP(N \geq \supn{\bsalpha})},
}
where $N$ is the integer-valued random variable induced by $\bsN = (N,N,\dots)$ and where $| \cdot |_{\infty}$ is the supremum norm. Since $\est{Z}'$ is defined as the estimator $\est{Z}$ for a special choice for $\bsN$, it is also unbiased.

\begin{proposition}
\label{prop:varianceBarZPrime}
If it exists, the second moment of the estimator $\est{Z}'$ takes the form
\eqns{
\bbE \est{Z}'^2 = \sum_{\bsalpha \in \calI_0^{\infty}} \dfrac{\est\nu'_{\bsalpha}}{\bbP(N \geq \supn{\bsalpha})},
}
with $\est\nu'_{\bsalpha} \defeq \bbE\big[\Delta S_{\bsalpha} \big( (S - S_{\supn{\bsalpha}-1}) + (S - S_{\supn{\bsalpha}})\big)\big]$.
\end{proposition}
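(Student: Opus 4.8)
The plan is to reduce the multi-index estimator $\est{Z}'$ to a one-dimensional Rhee--Glynn type estimator and then invoke \cref{rem:oneDimensional}. First I would note that, because $\bsN = (N,N,\dots)$, the event $\{\bsN \geq \bsalpha\}$ coincides with $\{N \geq \supn{\bsalpha}\}$, so $\bbP(\bsN \geq \bsalpha) = \bbP(N \geq \supn{\bsalpha})$. Conditioning on $\{N = n\}$ and regrouping the terms of $\est{Z}'$ according to the value $m = \supn{\bsalpha}$ gives
\eqns{
\est{Z}' = \sum_{m=0}^{N} \dfrac{1}{\bbP(N \geq m)} \sum_{\substack{\bsalpha \in \calI_0^m \\ \supn{\bsalpha} = m}} \Delta S_{\bsalpha} = \sum_{m=0}^{N} \dfrac{T_m - T_{m-1}}{\bbP(N \geq m)},
}
where $T_m \defeq \sum_{\bsalpha \in \calI_0^m} \Delta S_{\bsalpha}$ and $T_{-1} \defeq 0$.

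The key algebraic fact I would use is that $T_m = S_{\bsm}$ with $\bsm = (m,\dots,m)$. This is precisely the identity established in the proof of the unbiasedness theorem: that argument shows $\sum_{\bsalpha \in \calI_0^n} \bbE \Delta S_{\bsalpha} = \bbE S_{\bsn}$ via \cref{lem:sumIncrements} together with the binomial cancellation $\sum_{l=0}^{d-s(\bsalpha)} \binom{d-s(\bsalpha)}{l} (-1)^l = 0$ unless $\bsalpha$ is the top corner, and since it only relies on linear cancellations of the values it remains valid pathwise, i.e.\ $T_m = S_{\bsm}$. Writing $S_m \defeq S_{\bsm}$ and $S_{-1} \defeq 0$, the display above becomes $\est{Z}' = \sum_{m=0}^{N}(S_m - S_{m-1})/\bbP(N \geq m)$, which is exactly the one-dimensional estimator of \cref{rem:oneDimensional} associated with the diagonal family $(S_m)_{m \geq 0}$. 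Consequently, whenever the second moment exists, \cref{rem:oneDimensional} yields
\eqns{
\bbE \est{Z}'^2 = \sum_{m \geq 0} \dfrac{\|S_{m-1} - S\|_2^2 - \|S_m - S\|_2^2}{\bbP(N \geq m)}.
}

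It then remains to match this with the claimed expression. Since $\supn{\bsalpha}$ is constant on each shell $\{\bsalpha : \supn{\bsalpha} = m\}$, I would regroup $\sum_{\bsalpha} \est\nu'_{\bsalpha}/\bbP(N \geq \supn{\bsalpha}) = \sum_{m \geq 0} \big(\sum_{\supn{\bsalpha}=m} \est\nu'_{\bsalpha}\big)/\bbP(N \geq m)$, and, because the factor $(S - S_{m-1}) + (S - S_m)$ in $\est\nu'_{\bsalpha}$ does not depend on $\bsalpha$ within a shell, the inner sum equals $\bbE\big[(S_m - S_{m-1})\big((S - S_{m-1}) + (S - S_m)\big)\big]$, using once more $\sum_{\supn{\bsalpha}=m} \Delta S_{\bsalpha} = S_m - S_{m-1}$. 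Setting $a = S - S_{m-1}$, $b = S - S_m$, so that $S_m - S_{m-1} = a - b$, the identity $(a-b)(a+b) = a^2 - b^2$ gives $\bbE[(a-b)(a+b)] = \|S_{m-1} - S\|_2^2 - \|S_m - S\|_2^2$, which is exactly the summand above, completing the proof.

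The main obstacle I anticipate is the pathwise identity $T_m = S_{\bsm}$ with its careful bookkeeping of boundary indices $\bsalpha \not\geq \one$: this is the only place where the multi-index structure genuinely enters, and although it is a repackaging of the combinatorial cancellation already carried out in the unbiasedness theorem, one must check that restricting the telescoping box to $\calI_0^m$ (rather than $\calI_{k+1}^n$ as in \cref{lem:sumIncrements}) really isolates the single corner term $S_{\bsm}$. A secondary and more routine point is the legitimacy of the interchange of summation and expectation and the convergence of the regrouped series under the standing ``if it exists'' hypothesis, which can be argued exactly as in the proof of \cref{prop:varianceBarZ}.
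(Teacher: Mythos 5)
Your proof is correct, but it follows a genuinely different route from the paper's. The paper works directly with the partial estimator $\est{Z}'_n$: it expands $\bbE \est{Z}_n'^2$ as a double sum over $(\bsalpha,\bsbeta)$, splits it into the cases $\supn{\bsalpha}=\supn{\bsbeta}$ and $\supn{\bsalpha}<\supn{\bsbeta}$, and, for each \emph{fixed} $\bsalpha$, collapses only the inner sum over $\bsbeta$ using the telescoping identities $\sum_{\bsbeta\in\calI_0^n,\ \supn{\bsbeta}>m}\Delta S_{\bsbeta}=S_n-S_m$ and $\sum_{\supn{\bsbeta}=m}\Delta S_{\bsbeta}=S_m-S_{m-1}$; letting $n\to\infty$ then produces the per-index coefficients $\est\nu'_{\bsalpha}$ directly. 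You instead collapse the estimator itself, pathwise, onto the diagonal via $\sum_{\bsalpha\in\calI_0^m}\Delta S_{\bsalpha}=S_{(m,\dots,m)}$ --- an identity that indeed holds $\omega$-wise, and most transparently because each $\Delta_i$ telescopes coordinate-wise (so you need not re-run the binomial bookkeeping of the unbiasedness theorem) --- then quote the $d=1$ formula of \cref{rem:oneDimensional} and regroup the claimed right-hand side by shells $\{\supn{\bsalpha}=m\}$ to match. Both arguments hinge on the same telescoping cancellation; yours makes the nice structural point that $\est{Z}'$ is literally a one-dimensional Rhee--Glynn estimator for the diagonal sequence, at the price of importing \cite[Theorem~1]{Rhee2015} as a black box and of an extra un-grouping step to exhibit the per-index form of $\est\nu'_{\bsalpha}$ (which is the form used later for optimising the distribution of $N$). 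The convergence and rearrangement issues you flag at the end are treated at the same informal level in the paper's own proof, so you lose no rigour relative to it.
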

\begin{remark}
The expression of the second moment $\bbE\est{Z}'^2$ is closer to the one obtained in \cref{rem:oneDimensional} for the case $d=1$. This is natural since the simplification from $\est{Z}$ to $\est{Z}'$ amounts to making $\bsN$ single-variate, so that only the terms $\{S_{\bsalpha}\}_{\bsalpha}$ retain their multi-index nature. The expression of $\est\nu'_{\bsalpha}$ for $d=1$ can be recovered easily as
\eqnsa{
\est\nu'_{\bsalpha} & = \bbE\big[ ( S_{\bsalpha} - S_{\bsalpha-1} )\big((S - S_{\bsalpha-1}) + (S - S_{\bsalpha})\big) \big] \\
& = \bbE\big[ \big((S - S_{\bsalpha-1}) - (S - S_{\bsalpha})\big)\big((S - S_{\bsalpha-1}) + (S - S_{\bsalpha})\big) \big] \\
& = \|S - S_{\bsalpha-1}\|_2^2 - \|S - S_{\bsalpha}\|_2^2.
}
\end{remark}
\begin{proof}
As in the proof of \cref{prop:varianceBarZ}, a partial version of $\est{Z}'$ can be introduced as
\eqns{
\est{Z}'_n \defeq \sum_{\bsalpha \in \calI_0^{N\land n}} \dfrac{\Delta S_{\bsalpha}}{\bbP(N \geq \supn{\bsalpha})} = \sum_{\bsalpha \in \calI_0^n} \Delta S_{\bsalpha}\dfrac{\bbI_{\{N \geq \supn{\bsalpha} \}}}{\bbP(N \geq \supn{\bsalpha})}.
}
It holds that
\eqnsa{
\bbE \est{Z}'^2_n & = \sum_{\substack{\bsalpha,\bsbeta \in \calI_0^n \\ \supn{\bsalpha} = \supn{\bsbeta}}} \dfrac{\bbE[\Delta S_{\bsalpha} \Delta S_{\bsbeta}]}{\bbP(N \geq \supn{\bsalpha})} + 2\sum_{\substack{\bsalpha,\bsbeta \in \calI_0^n \\ \supn{\bsalpha} < \supn{\bsbeta}}} \dfrac{\bbE[\Delta S_{\bsalpha} \Delta S_{\bsbeta}]}{\bbP(N \geq \supn{\bsalpha})} \\
& = \sum_{\bsalpha \in \calI_0^n} \dfrac{ \bbE\big[ \Delta S_{\bsalpha} ( S_{|\bsalpha|_{\infty}} - S_{|\bsalpha|_{\infty}-1} )  + 2 \Delta S_{\bsalpha} ( S_n - S_{|\bsalpha|_{\infty}} ) \big] }{\bbP(N \geq \supn{\bsalpha})},
}
where the following relations have been used with $m = |\bsalpha|_{\infty} \leq n$:
\eqns{
\sum_{\substack{\bsbeta \in \calI_0^n \\ \supn{\bsbeta} > m }} \Delta S_{\bsbeta} = \sum_{\bsbeta \in \calI_0^n} \Delta S_{\bsbeta} - \sum_{\bsbeta \in \calI_0^m} \Delta S_{\bsbeta} = S_n - S_m
}
and
\eqns{
\sum_{\substack{\bsbeta \in \calI_0^n \\ \supn{\bsbeta} = m }} \Delta S_{\bsbeta} = \sum_{\substack{\bsbeta \in \calI_0^m \\ \supn{\bsbeta} > m-1 }} \Delta S_{\bsbeta} = S_m - S_{m-1}.
}
The desired result is obtained by rearranging the terms and taking the limit $n \to \infty$.
\end{proof}

We note that if one produces independent realizations of $Z'$ then \cref{prop:varianceBarZPrime} can be used to obtain a specific variance. That is, for specific models (see e.g.\ \cite{Haji2016}) one has expressions for $\est\nu'_{\bsalpha}$, appropriately centered, in terms of a function $\psi(\bsalpha)$ which goes to zero as $\min_i \alpha_i\rightarrow \infty$ such that $\bbE \est{Z}'^2<+\infty$. Then, for some $\epsilon>0$, one can choose the number of samples to make the variance $\mathcal{O}(\epsilon^2)$ as in the MLMC/MIMC literature \cite{gile:08,Haji2016}. 

Following \cite{Rhee2015}, a variant of the estimator $Z$ can be introduced as follows
\eqns{
\tilde{Z} = \sum_{\bsalpha \in \calI_0^{\bsN}} \dfrac{\tilde\Delta_{\bsalpha}}{\bbP(\bsN \geq \bsalpha)}
}
where the random variable $\tilde\Delta_{\bsalpha}$ is defined for any $\bsalpha \in \bbN_0^d$ as
\eqns{
\tilde\Delta_{\bsalpha} \defeq \sum_{\bsr \in \{0,1\}^d} (-1)^{|\bsr|} \tilde{S}_{\bsalpha-\bsr}
}
with the joint random variable $(\tilde{S}_{\bsalpha-\bsr})_{\bsr \in \{0,1\}^d}$ having the same marginal distributions as the joint $(S_{\bsalpha-\bsr})_{\bsr \in \{0,1\}^d}$, making $\tilde{Z}$ unbiased. The estimators $Z$ and $\tilde{Z}$ can be respectively referred to as the \emph{coupled-sum} estimator and the \emph{independent-sum} estimator. A simpler version of the estimator $\tilde{Z}$ can be introduced as previously for $Z$ by assuming that realisations of $\bsN$ are constant on $\bbN_0^d$ almost surely:
\eqns{
\tilde{Z}' = \sum_{\bsalpha \in \calI_0^N} \dfrac{\tilde\Delta_{\bsalpha}}{\bbP(N \geq \supn{\bsalpha})}.
}

\begin{proposition}
If it exists, the second-moment of the estimator $\tilde{Z}'$ is found to be
\eqns{
\bbE \tilde{Z}'^2 = \sum_{\bsalpha \in \calI_0^{\infty}} \dfrac{\tilde\nu'_{\bsalpha}}{\bbP(N \geq \supn{\bsalpha})},
}
with $\tilde\nu'_{\bsalpha} \defeq \var(\Delta S_{\bsalpha}) + \bbE\Delta S_{\bsalpha} \big( (\bbE S - \bbE S_{\supn{\bsalpha}-1}) + (\bbE S - \bbE S_{\supn{\bsalpha}})\big)$.
\end{proposition}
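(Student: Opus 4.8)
The plan is to follow the proof of \cref{prop:varianceBarZPrime} verbatim in its structure, the only new ingredient being that the increments $\tilde\Delta_{\bsalpha}$ used to build $\tilde{Z}'$ are \emph{independent} across $\bsalpha$, which turns the mixed second moments into products of first moments. First I would introduce the truncated estimator
\begin{equation*}
\tilde{Z}'_n \defeq \sum_{\bsalpha \in \calI_0^{N\land n}} \dfrac{\tilde\Delta_{\bsalpha}}{\bbP(N \geq \supn{\bsalpha})} = \sum_{\bsalpha \in \calI_0^n} \tilde\Delta_{\bsalpha}\dfrac{\bbI_{\{N \geq \supn{\bsalpha}\}}}{\bbP(N \geq \supn{\bsalpha})},
\end{equation*}
expand $\bbE\tilde{Z}'^2_n$, and use the independence of $N$ from the increments together with $\bbI_{\{N\geq\supn{\bsalpha}\}}\bbI_{\{N\geq\supn{\bsbeta}\}}=\bbI_{\{N\geq\supn{\bsalpha}\lor\supn{\bsbeta}\}}$ to pull out the deterministic weight $\bbP(N\geq\supn{\bsalpha}\lor\supn{\bsbeta})/\big(\bbP(N\geq\supn{\bsalpha})\bbP(N\geq\supn{\bsbeta})\big)$, which equals $1/\bbP(N\geq\supn{\bsalpha})$ whenever $\supn{\bsalpha}\le\supn{\bsbeta}$, exactly as in \cref{prop:varianceBarZPrime}.

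The key step is the identity
\begin{equation*}
\bbE[\tilde\Delta_{\bsalpha}\tilde\Delta_{\bsbeta}] = \bbI_{\{\bsalpha=\bsbeta\}}\var(\Delta S_{\bsalpha}) + \bbE\Delta S_{\bsalpha}\,\bbE\Delta S_{\bsbeta},\qquad \bsalpha,\bsbeta\in\bbN_0^d,
\end{equation*}
which I would justify from the construction of $\tilde{Z}$: for each $\bsalpha$ the family $(\tilde S_{\bsalpha-\bsr})_{\bsr\in\{0,1\}^d}$ has the same joint law as $(S_{\bsalpha-\bsr})_{\bsr\in\{0,1\}^d}$, so $\tilde\Delta_{\bsalpha}\overset{d}{=}\Delta S_{\bsalpha}$ and in particular $\bbE\tilde\Delta_{\bsalpha}=\bbE\Delta S_{\bsalpha}$ and $\bbE\tilde\Delta_{\bsalpha}^2=\var(\Delta S_{\bsalpha})+(\bbE\Delta S_{\bsalpha})^2$; while these families being independent for distinct $\bsalpha$ gives $\bbE[\tilde\Delta_{\bsalpha}\tilde\Delta_{\bsbeta}]=\bbE\Delta S_{\bsalpha}\,\bbE\Delta S_{\bsbeta}$ when $\bsalpha\ne\bsbeta$; the two cases combine into the displayed formula. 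Substituting this into $\bbE\tilde{Z}'^2_n$ separates it into a genuinely diagonal term $\sum_{\bsalpha\in\calI_0^n}\var(\Delta S_{\bsalpha})/\bbP(N\geq\supn{\bsalpha})$ and a ``factored'' double sum obtained from the one in \cref{prop:varianceBarZPrime} by replacing each random increment $\Delta S_{\bsalpha}$, $\Delta S_{\bsbeta}$ with its mean and, correspondingly, $S$ with the constant $\bbE S$.

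The factored double sum is then collapsed exactly as in the proof of \cref{prop:varianceBarZPrime}, using linearity of $\bbE$ and the telescoping identities $\sum_{\bsbeta\in\calI_0^n,\,\supn{\bsbeta}=m}\bbE\Delta S_{\bsbeta}=\bbE S_m-\bbE S_{m-1}$ and $\sum_{\bsbeta\in\calI_0^n,\,\supn{\bsbeta}>m}\bbE\Delta S_{\bsbeta}=\bbE S_n-\bbE S_m$ (valid for $m\le n$), which reduces it to $\sum_{\bsalpha\in\calI_0^n}\bbE\Delta S_{\bsalpha}\big[(\bbE S_{\supn{\bsalpha}}-\bbE S_{\supn{\bsalpha}-1})+2(\bbE S_n-\bbE S_{\supn{\bsalpha}})\big]/\bbP(N\geq\supn{\bsalpha})$. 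Adding back the diagonal term, rewriting $(\bbE S_{\supn{\bsalpha}}-\bbE S_{\supn{\bsalpha}-1})+2(\bbE S_n-\bbE S_{\supn{\bsalpha}})=(\bbE S_n-\bbE S_{\supn{\bsalpha}-1})+(\bbE S_n-\bbE S_{\supn{\bsalpha}})$, and letting $n\to\infty$ — so $\bbE S_n\to\bbE S$ by \cref{eq:limitCondition} and $\bbE\tilde{Z}'^2_n\to\bbE\tilde{Z}'^2$ under the standing hypothesis that the second moment is finite — produces the claimed expression with $\tilde\nu'_{\bsalpha}=\var(\Delta S_{\bsalpha})+\bbE\Delta S_{\bsalpha}\big((\bbE S-\bbE S_{\supn{\bsalpha}-1})+(\bbE S-\bbE S_{\supn{\bsalpha}})\big)$.

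I expect the only real obstacle to be bookkeeping: correctly isolating the diagonal term $\bsalpha=\bsbeta$ (which alone contributes a variance, the off-diagonal terms with $\supn{\bsalpha}=\supn{\bsbeta}$ belonging instead to the factored sum), and checking that, after this regrouping, the combinatorial weights line up with those already computed in \cref{prop:varianceBarZPrime}; the independence across indices is precisely what keeps this manipulation clean, and the passage to the limit is routine given the assumed finiteness of $\bbE\tilde{Z}'^2$.
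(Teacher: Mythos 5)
Your proposal is correct and follows essentially the same route as the paper: truncate to $\tilde{Z}'_n$, split the double sum by comparing $\supn{\bsalpha}$ and $\supn{\bsbeta}$, use independence across indices to factor off-diagonal terms into products of means, collapse via the telescoping identities, and let $n\to\infty$. The only cosmetic difference is that you fold $(\bbE\tilde\Delta_{\bsalpha})^2$ into the factored sum and keep $\var(\Delta S_{\bsalpha})$ separate, whereas the paper keeps $\bbE\tilde\Delta_{\bsalpha}^2$ on the diagonal and subtracts $\bbE\tilde\Delta_{\bsalpha}$ from the equal-sup-norm sum; these are algebraically identical.
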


\begin{proof}
The partial version $\tilde{Z}'_n$ of the estimator $\tilde{Z}'$ is introduced as before with $Z'$ and verifies
\eqnsa{
& \bbE \tilde{Z}'^2_n \\
& = \sum_{\bsalpha \in \calI_0^n} \dfrac{\bbE[\tilde\Delta_{\bsalpha}^2]}{\bbP(N \geq \supn{\bsalpha})} + \sum_{\substack{\bsalpha,\bsbeta \in \calI_0^n \\ \supn{\bsalpha} = \supn{\bsbeta} \\ \bsbeta \neq \bsalpha}} \dfrac{\bbE \tilde\Delta_{\bsalpha} \bbE \tilde\Delta_{\bsbeta}}{\bbP(N \geq \supn{\bsalpha})} + 2\sum_{\substack{\bsalpha,\bsbeta \in \calI_0^n \\ \supn{\bsalpha} < \supn{\bsbeta}}} \dfrac{\bbE\tilde\Delta_{\bsalpha} \bbE\tilde\Delta_{\bsbeta}}{\bbP(N \geq \supn{\bsalpha})} 
}
\eqnsa{
& = \sum_{\bsalpha \in \calI_0^n} \dfrac{ \bbE\tilde\Delta_{\bsalpha}^2 + \bbE\tilde\Delta_{\bsalpha} ( \bbE S_{|\bsalpha|_{\infty}} - \bbE S_{|\bsalpha|_{\infty}-1} - \bbE\tilde\Delta_{\bsalpha} )  + 2 \bbE\tilde\Delta_{\bsalpha} ( \bbE S_n - \bbE S_{|\bsalpha|_{\infty}} ) \big] }{\bbP(N \geq \supn{\bsalpha})} \\
& = \sum_{\bsalpha \in \calI_0^n} \dfrac{\tilde\nu'_{\bsalpha}}{\bbP(N \geq \supn{\bsalpha})},
}
from which the result of the proposition follows.
\end{proof}

\subsection[Optimal distribution for N]{Optimal distribution for $\bsN$}
\label{ssec:optimalDistributionN}

Since $\bsN$ is a design random variable, its distribution can be chosen in a way that maximises the performance of the corresponding MIMC method in a sense to be defined. The objective is to optimise jointly the computational effort and the accuracy of the algorithm. The former is quantified by the time necessary to compute one instance of $\est{Z}$ while the latter is represented by the variance of $\est{Z}$. Consider an arbitrary total order $\dotleq$ on $\bbN_0^d$ that is compatible with $\leq$, i.e.\ such that $\bsalpha \leq \bsbeta$ implies $\bsalpha \dotleq \bsbeta$ for any $\bsalpha,\bsbeta \in \bbN^d_0$, and define $t_{\bsalpha}$ as the time to compute the terms in $\Delta S_{\bsalpha}$ that have not already been computed for previous $\Delta S_{\bsbeta}$ with $\bsbeta \dotle \bsalpha$. Let $\est\tau$ be the time necessary to compute $\est{Z}$, then
\eqns{
\bbE\est\tau = \sum_{\bsalpha \in \calI_0^{\infty}} t_{\bsalpha} \bbP(\bsN \geq \bsalpha).
}
For a given duration $c$, let $n_c \defeq \max\{ n \in \bbN_0 \st \sum_{i=1}^n \est\tau_i \leq c \}$ be the number of copies $\est{Z}_i$ of $\est{Z}$ that can be generated in this amount of time. The sample average $\est{m}_n$ defined as
\eqns{
\est{m}_n = \dfrac{1}{n} \sum_{i = 1}^n \est{Z}_i,
}
can then be used to formulate a CLT when $\bbE\est\tau$ and $\var\est{Z}$ are finite as \cite{Glynn1992}
\eqnl{eq:CLT}{
c^{1/2} (\est{m}_{n_c} - m) \implies \big( \bbE[ \est\tau] \var\est{Z} \big)^{1/2} \calN(0,1),
}
where $m \defeq \bbE S$, where $\calN(0,1)$ is the standard Gaussian distribution and where $\implies$ denotes the convergence in probability when $c \to \infty$.

\begin{remark}
For instance, the computational time $t'_{\bsalpha}$ for the term $S_{\bsalpha}$ can be assumed to be equal to $2^{|\bsalpha|}$. This assumption makes sense in many cases including the ones considered here in the numerical results where partial differential equations are solved on meshes with $2^{|\bsalpha|}$ elements. If we consider the independent sum-estimator then $t_{\bsalpha}$ is the computational time for the whole of $\Delta S_{\bsalpha}$ which verifies
\eqns{
t_{\bsalpha} = \sum_{\bsr \in \{0,1\}^d} (-1)^{|\bsr|} t'_{\bsalpha - \bsr} \leq 2^{|\bsalpha| + d}.
}
Then, the expected computational time $\bbE[\tau]$ will be finite if the probability $\bbP(\bsN \geq \bsalpha)$ is of order $O(2^{-r|\bsalpha|})$ with $r > 1$. However, for \cref{eq:prop:varianceBarZ:condition} to hold, the tail of the distribution of $\bsN$ also needs to be sufficiently large. For instance, if $\| S_{\bsalpha} - S \| = O(2^{-|\bsalpha|p})$ for some $p > 0$ related to the considered numerical scheme, then
\eqns{
\Delta \| S_{\bsalpha} - S \|_2 \Delta \| S_{\bsbeta} - S \|_2 = O(2^{-p|\bsalpha+\bsbeta|})
}
and $\bbP(\bsN \geq \bsalpha \lor \bsbeta) = O(2^{-r|\bsalpha \lor \bsbeta|})$ also have to verify $r < p$ since $|\bsalpha \lor \bsbeta| \leq |\bsalpha + \bsbeta|$ for any $\bsalpha,\bsbeta \in \bbN^d_0$. The condition \cref{eq:prop:varianceBarZ:condition} can be weakened for special cases to allow for more freedom in the choice of $\bsN$.
\end{remark}

\Cref{eq:CLT} indicates that the distribution of the random variable $\bsN$ can be chosen in a way to make the product between the expected computational time $\bbE\est\tau$ and the variance $\var\est{Z}$ as small as possible. The following problem is therefore considered:
\eqnla{eq:optimalDistributionFormulation}{
& \min_{\est{F}} \est{g}(\est{F}) \defeq \Bigg( \sum_{\bsalpha,\bsbeta \in \calI_0^{\infty}} \est\nu_{\bsalpha,\bsbeta} \dfrac{\est{F}_{\bsalpha \lor \bsbeta}}{\est{F}_{\bsalpha} \est{F}_{\bsbeta}} - m^2 \Bigg) \Bigg( \sum_{\bsalpha \in \calI_0^{\infty}} t_{\bsalpha} \est{F}_{\bsalpha} \Bigg) \\
\label{eq:optimalDistributionFormulation_constraint1}
& \text{subject to } (\est{F}_{\bsalpha})_{\bsalpha} \text{ is a strictly positive net} \\
\label{eq:optimalDistributionFormulation_constraint2}
& \hphantom{\text{subject to }} \est{F}_{\bsalpha} \geq \est{F}_{\bsbeta} \text{ for any } \bsalpha \leq \bsbeta \\
\label{eq:optimalDistributionFormulation_constraint3}
& \hphantom{\text{subject to }} \est{F}_0 = 1.
}
The solution to \cref{eq:optimalDistributionFormulation} is difficult to formulate in general, however, the special case of the estimator $\est{Z}'$ yields the simpler problem:
\eqns{
\min_{\est{F}} \est{g}'(\est{F}) \defeq \Bigg( \sum_{\bsalpha \in \calI_0^{\infty}} \dfrac{\est\nu'_{\bsalpha}}{\est{F}_{\bsalpha}} - m^2 \Bigg) \Bigg( \sum_{\bsalpha \in \calI_0^{\infty}} t_{\bsalpha} \est{F}_{\bsalpha} \Bigg),
}
subject to \crefrange{eq:optimalDistributionFormulation_constraint1}{eq:optimalDistributionFormulation_constraint3} and
\eqnl{eq:optimalDistributionFormulation_constraint4}{
\est{F}_{\bsalpha} = \est{F}_{\bsbeta} \text{ for any } \bsalpha,\bsbeta \text{ verifying } |\bsalpha|_{\infty} = |\bsbeta|_{\infty}.
}
By a direct generalisation of \cite[Proposition~1]{Rhee2015}, it holds that if the net $(\est\mu'_{\bsalpha})_{\bsalpha}$, defined as $\est\mu'_{\zero} = \est\nu'_{\zero}-m^2$ and $\est\mu'_{\bsalpha} = \est\nu'_{\bsalpha}$ for any $\bsalpha \neq \zero$, is non-negative then the following inequality holds
\eqns{
\est{g}'(\est{F}) \geq \bigg( \sum_{\bsalpha \in \calI_0^{\infty}} \sqrt{\est\mu'_{\bsalpha} t_{\bsalpha}} \bigg) = \est{g}'(\est{F}^{\dagger})
}
with $\est{F}^{\dagger}$ characterised by
\eqns{
\est{F}^{\dagger}_{\bsalpha} \defeq \dfrac{\sqrt{\est\mu'_{\bsalpha} / t_{\bsalpha}}}{\sqrt{\est\mu'_{\zero} / t_{\zero}}},
}
for any $\bsalpha \in \bbN^d_0$. However, $\est{F}^{\dagger}$ might not be feasible and the solution over the feasible region is denoted $\est{F}^*$; by \cite[Proposition~2]{Rhee2015}, this minimum is achieved if $(\mu_{\bsalpha})_{\bsalpha}$ is positive and $(t_{\bsalpha})_{\bsalpha}$ is bounded below by a positive constant.

Due to the constraint \cref{eq:optimalDistributionFormulation_constraint4}, the probability mass function induced by $\est{F}^*$ can be characterised by its values on the diagonal of $\bbN_0^d$ and we consider the sequence of indices $J^* = (i^*_j)_{j \geq 0}$ such that $i^*_0 = 0$ and
\eqns{
i^*_j = \inf\{ k > i^*_{j-1} \st \est{F}^*_k < \est{F}^*_{i^*_{j-1}} \},
}
where $\est{F}^*_k$ is a shorthand notation for $\est{F}^*_{\bsk}$ with $\bsk = (k,\dots,k)$. Denoting, for any strictly increasing integer-valued sequence $J = (i_j)_{j \geq 0}$,
\eqns{
\est\mu'_j(J) = \sum_{\substack{\bsalpha \in \calI_0^{\infty} \\ i_j \leq |\bsalpha|_{\infty} < i_{j+1}}} \est\mu'_{\bsalpha} \AND t_j(J) = \sum_{\substack{\bsalpha \in \calI_0^{\infty} \\ i_j \leq |\bsalpha|_{\infty} < i_{j+1}}} t_{\bsalpha},
}
it follows that
\eqns{
\est{g}'(\est{F}^*) = \bigg( \sum_{k \geq 0} \dfrac{\est\mu'_k(J^*)}{\est{F}^*_{i^*_k}} \bigg) \bigg( \sum_{k \geq 0} t_k(J^*) \est{F}^*_{i^*_k} \bigg).
}
Extending the results of \cite[Theorem~3]{Rhee2015} to the considered setting, it holds that if $(\est\mu'_{\bsalpha})_{\bsalpha}$ is a positive net and $(t_{\bsalpha})_{\bsalpha}$ is non-decreasing w.r.t.\ $\dotleq$, then there exists an optimiser $\est{F}^*$ inducing a sequence $J^*$ such that
\eqns{
\est{F}^*_j = \sqrt{ \dfrac{\est\mu'_{\gamma_j}(J^*) / t_{\gamma_j}(J^*)}{\est\mu'_0(J^*) / t_0(J^*)} },
}
where $\gamma_j$ is the unique integer verifying $i^*_{\gamma_j} \leq j < i^*_{\gamma_j+1}$. It follows that
\eqns{
\est{g}'(\est{F}^*) = \bigg(  \sum_{k \geq 0} \sqrt{ \est\mu'_k(J^*) / t_k(J^*) } \bigg)^2.
}
These expressions are the same for unbiased MLMC and the considered instance of unbiased MIMC so that \cite[Algorithm~1]{Rhee2015} can be used to find the desired optimal sequence $J^*$.

\section{Numerical Results}
\label{sec:numerics}

In order to evaluate the performance of the proposed unbiased MIMC (UMIMC) method, a comparison with the MIMC algorithm of \cite{Haji2016} is performed on two different problems. The first is covered in \cref{ssec:PDE} and comprises computing a mollified version of the solution of a partial differential equation with random coefficients. The second application is an inference problem for a partially observed signal modelled by an SPDE on a 1-dimensional domain in \cref{ssec:SPDE}.
We begin by giving some implementation details for the UMIMC method.

\subsection{Implementation}
\label{sec:implementation}

In this section as well as in the numerical results, we make use of the simplified version $\tilde{Z}'$ of the independent-sum estimator $\tilde{Z}$. Since, in practice, realisations of $S_{\bsalpha}$ can only be computed up to a certain level, the partial estimator $\tilde{Z}'_m$ defined as
\eqns{
\tilde{Z}'_m \defeq \sum_{\bsalpha \in \calI^{N\land m}_0} \dfrac{\tilde\Delta_{\bsalpha}}{\bbP(N \geq \supn{\bsalpha})}
}
has to be considered instead, for a given $m$. In order to accurately calculate the optimal sequence~$J^*$ yielding the optimal distribution for $N$ as described in \cref{ssec:optimalDistributionN}, a high number of realisations of $\tilde\Delta_{\bsalpha}$ has to be computed for any $\bsalpha \in \calI^m_0$. This fact implies that the computational effort required to start estimating from $\tilde{Z}'_m$ is high. To bypass this limitation and reduce the number of realisations of $\tilde\Delta_{\bsalpha}$ computed before calculating $J^*$, the latter is updated frequently. One of the consequences of this solution is that the probabilities $\bbP(N \geq \supn{\cdot})$ will vary through the iterations. This drawback can be compensated for by dividing the increment $\tilde\Delta_{\bsalpha}$ by the number of times it has been sampled, instead of the probability of it being sampled. In spite of their difference, these two normalisations are equivalent asymptotically, by an easy application of the strong law of large numbers. 
Note that with the adaptations, one must do some more work to verify the consistency of the estimator, but this is left for future work.
The estimation thus takes the following form in practice:
\eqns{
\bbE\tilde{Z}'_m \approx \sum_{i=1}^M \bigg[ \sum_{\bsalpha \in \calI^{n_i \land m}_0} \dfrac{\tilde\Delta_{\bsalpha,i}}{ \sum_{j=1}^M \bbI_{\{n_j \geq \supn{\bsalpha}\}} } \bigg],
}
where $M$ is the total number of iterations, where $\tilde\Delta_{\bsalpha,i}$ is the sampled increment and where $n_i$ is a sample from $N$ at the $i$\textsuperscript{th} iteration.

\subsection{A Partial Differential Equation with random coefficients}
\label{ssec:PDE}

We consider here a partial differential equation with random coefficients of the form
\begin{subequations}
\label{eq:PDE}
\eqnl{eq:PDE_domain}{
-\nabla \cdot (a(\bsx;\omega) \nabla u(\bsx;\omega)) = 1 \qquad \text{for }\bsx \in D
}
on a domain $D = [0,1]^2$ with $\omega\in\Omega$. We will assume that $u$ satisfies the Dirichlet boundary condition
\eqnl{eq:PDE_boundary}{
u(\bsx;\omega) = 0 \qquad \text{for }\bsx \in \partial D. 
}
\end{subequations}
Similarly to \cite{Haji2016}, the diffusion coefficient $a$ is defined as
\eqns{
a(\bsx;\omega) = 1 + \exp\big(2 Y_1(\omega) \sin(\pi \bsx_1)\cos(\pi \bsx_2) + 2 Y_2(\omega) \cos(4\pi\bsx_1)\sin(4\pi\bsx_2)\big),
}
and the random variable of interest is
\eqns{
X(\omega) = \dfrac{100}{\sigma\sqrt{2\pi}} \int_D \exp\bigg( -\dfrac{\| \bsx - \bsx_0 \|^2_2}{2\sigma^2}\bigg) u(\bsx;\omega) \d\bsx,
}
with $\sigma = 0.16$ and $\bsx_0 = [0.5; 0.2]$. The objective is to compute $\bbE[X]$ using the proposed method with $d=2$. For a each realisation of $Y_1$ and $Y_2$, the partial differential equation \cref{eq:PDE} is solved by finite element method on a linear and uniform meshing defined on $D$ with $4\times 2^{\bsalpha_i}$ elements in the $i$\textsuperscript{th} dimension for a multi-index $\bsalpha$. The terms corresponding to any index $\bsalpha$ such that $|\bsalpha_2 -\bsalpha_1| > 2$ are not computed in order to avoid numerical issues with degenerated elements.

To better understand the accuracy associated with each index, some of the produced meshes at different levels are shown in \cref{fig:solutionAtLevels}. \Cref{fig:TOLvsWork} shows the RMSE as a function of the computational effort for the unbiased MIMC when the greatest multi-index available $\bsalpha_{\max} = \alpha_{\max}(1,1)$ is either $(3,3)$, $(4,4)$ or $(5,5)$, compared with the MIMC algorithm described in \cite[Sec.\ 3.2.2]{Haji2016}. Three scalar parameters have to be set in the MIMC algorithm, the accuracy $\mathrm{TOL}$, a splitting parameter $\theta$ and a confidence level $0 < \epsilon \ll 1$ defined such that
\eqnsa{
| \bbE[ Z_{\text{\sc mimc}} - S ] |  &\leq (1-\theta)\mathrm{TOL} \\
\bbP( | Z_{\text{\sc mimc}} - \bbE[Z_{\text{\sc mimc}}] | \leq \theta\mathrm{TOL} ) & \geq 1-\epsilon,
}
where $Z_{\text{\sc mimc}}$ is the MIMC estimator. The values $\mathrm{TOL} = 5\times 10^{-3}$, $\theta = 0.5$ and $\epsilon = 0.25$ are considered here. The maximum computational effort considered is increased with the value of $\alpha_{\max}$ to take into account the corresponding computational overhead.

Comparing \crefrange{fig:TOLvsWork_3}{fig:TOLvsWork_5}, it appears that the proposed implementation of the UMIMC and the considered version of MIMC behave very differently in time: the UMIMC has a higher error at the start since it relies on all levels at all times and requires more computational effort to compensate for the randomness in the coefficient of the considered PDE. This effect is more pronounced when the value of $\alpha_{\max}$ increases. However, the error in the MIMC increases at the times when it starts to perform computations for higher indices, since the effect of the random coefficients has to be averaged again, whereas the error of the UMIMC decreases monotonically. These remarks about each method do not allow to conclude that one is better than the other, but they highlights the differences in terms of implementation: the considered version of the MIMC attempts to reach a given level of precision determined by some parameters while the UMIMC requires the setting of fewer parameters but offers less control on its behaviour.

\begin{figure}
\thispagestyle{empty}
\centering
\includegraphics[trim=90pt 280pt 75pt 95pt, clip,width=\textwidth]{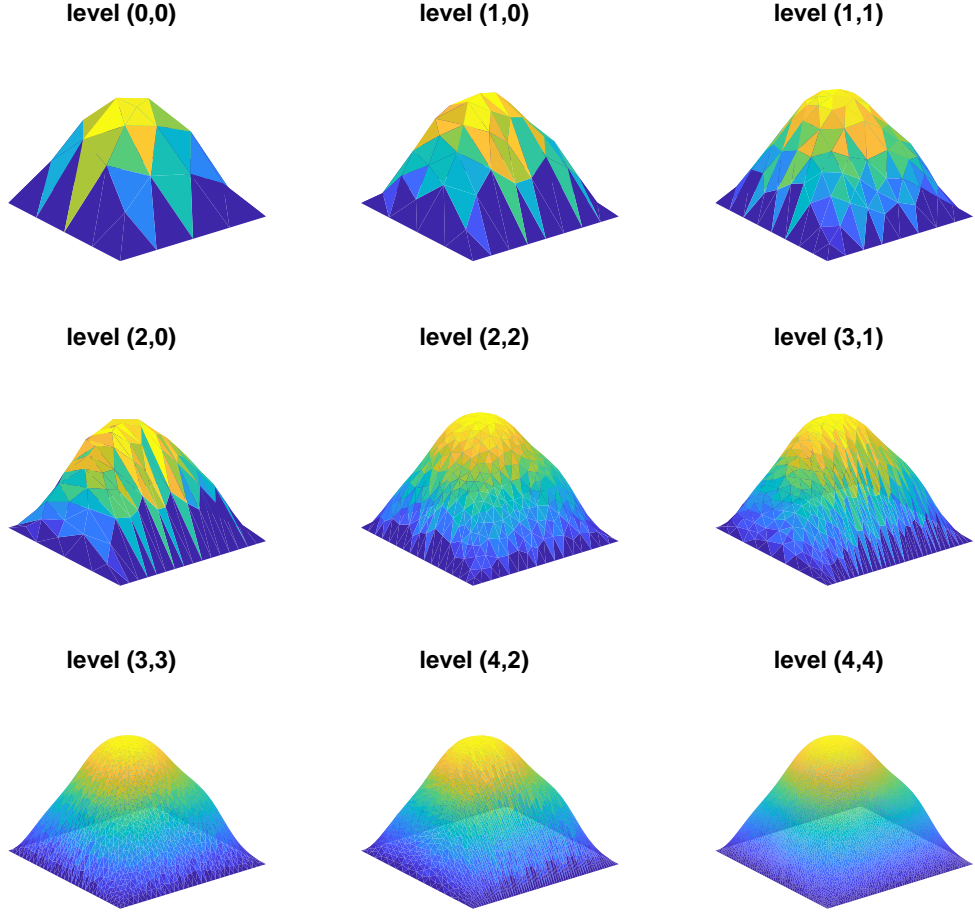}
\caption{Solution of \cref{eq:PDE} obtained when letting $Y_1 = Y_2 = 1/2$ a.s., at different levels.}
\label{fig:solutionAtLevels}
\end{figure}

\begin{figure}
\thispagestyle{empty}
\centering
\begin{subfigure}[b]{0.72\textwidth}
\includegraphics[width=\textwidth,trim = 105pt 280pt 90pt 290pt, clip]{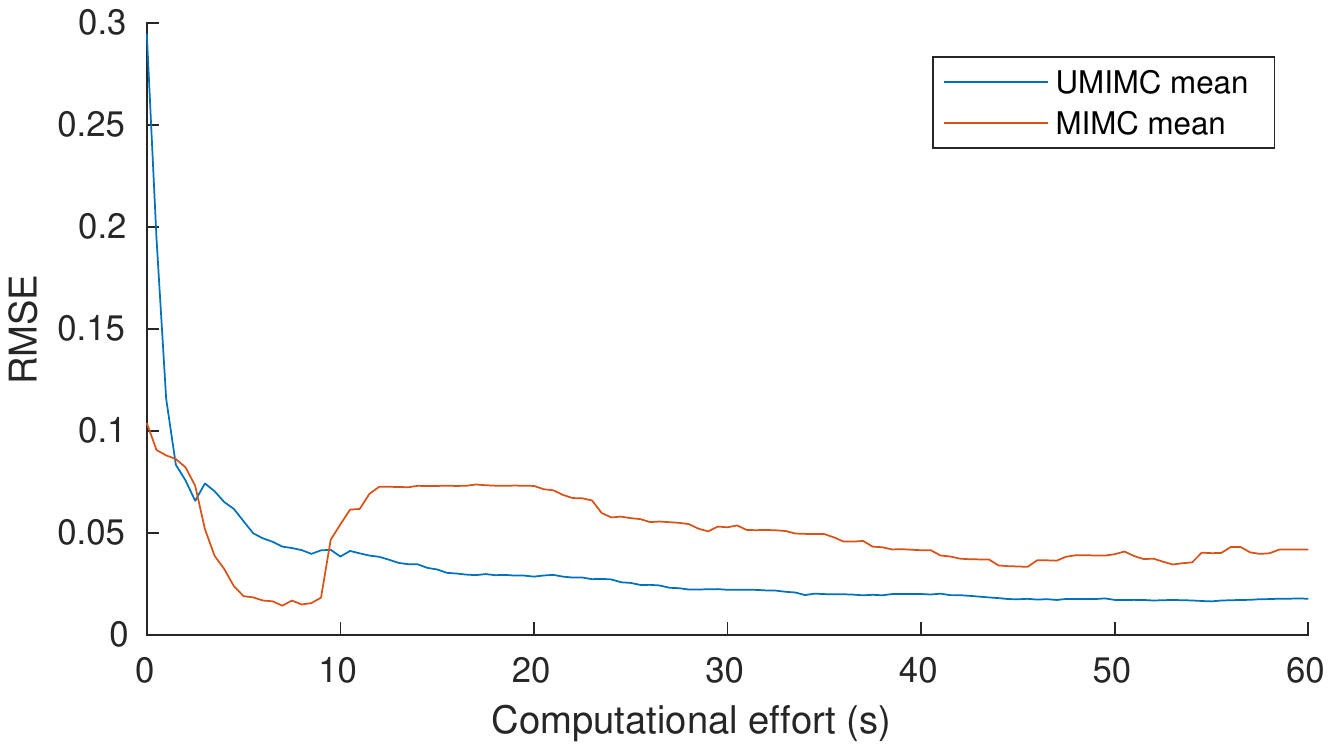}
\caption{Maximum level $(3,3)$}
\label{fig:TOLvsWork_3}
\end{subfigure}
\begin{subfigure}[b]{0.72\textwidth}
\includegraphics[width=\textwidth,trim = 105pt 280pt 90pt 290pt, clip]{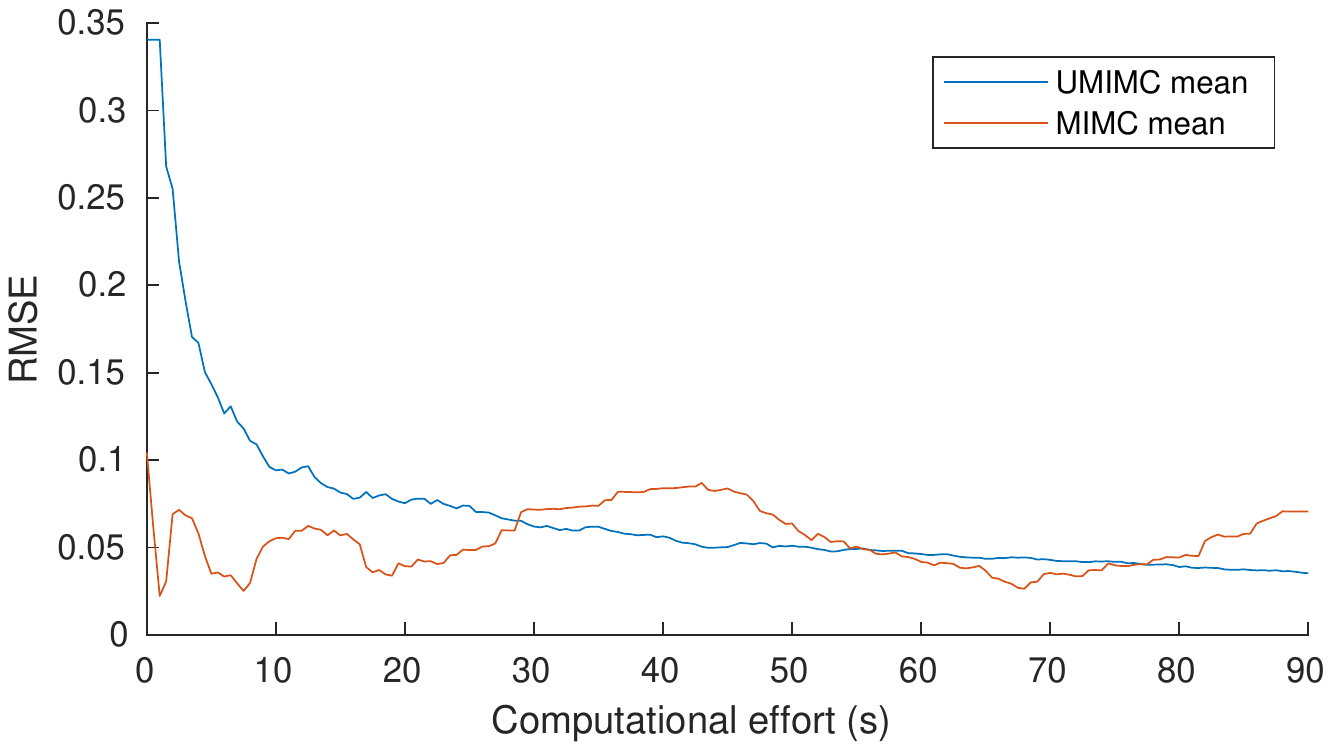}
\caption{Maximum level $(4,4)$}
\label{fig:TOLvsWork_4}
\end{subfigure}
\begin{subfigure}[b]{0.72\textwidth}
\includegraphics[width=\textwidth,trim = 105pt 280pt 90pt 290pt, clip]{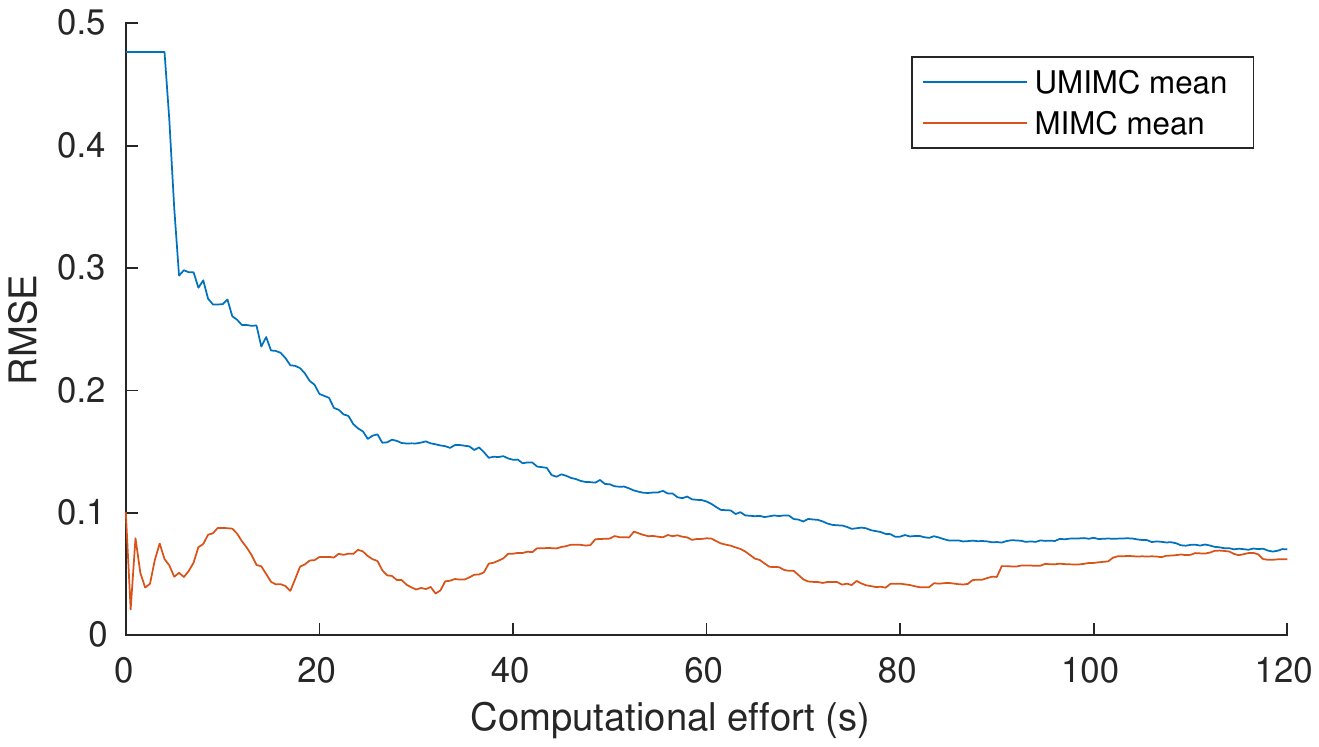}
\caption{Maximum level $(5,5)$}
\label{fig:TOLvsWork_5}
\end{subfigure}
\caption{RMSE of the MIMC and UMIMC algorithms as a function of the computational effort for the problem \cref{eq:PDE}. The results are averaged over $50$ Monte Carlo runs.}
\label{fig:TOLvsWork}
\end{figure}

\subsection{Inference of partially observed solutions of SPDE}
\label{ssec:SPDE}

Following \cite{Jentzen2009}, consider an unknown signal modelled by the solution of the SPDE 
\begin{subequations}
\label{eq:SPDE}
\eqnl{eq:SPDE_domain}{
\dfrac{\partial u}{\partial t} = \dfrac{\partial^2 u}{\partial x^2} + \dfrac{1}{2}u + \dfrac{\partial W_t}{\partial t}
}
defined on a domain $D = [0,1]$ and time $t \in [0,T]$, with the initial condition
\eqnl{eq:SPDE_boundary}{
u(x,0) = \sum_{n \geq 1} \dfrac{1}{n} e_n(x),
}
\end{subequations}
where the eigenfunction $e_n(x) = \sqrt{2} \sin(n \pi x)$ has corresponding eigenvalue $n^2\pi^2$, and with a cylindrical Brownian motion
\eqns{
W_t = \sum_{n \geq 1} \sqrt{q_n} e_n \beta^n_t,
}
where the terms $\beta^n_t$, $n \geq 1$, are independent Brownian motions. The final time $T$ is set to $T = 0.1$ and the variance of the Brownian motion $q_n$ is set to $q_n = 0.01$. The SPDE is observed at times $t_k = Tk/K$ for $k \in \{1,\dots,K\}$ and at the locations $o_l = l/(K'+1)$ for $l \in \{1,\dots,K'\}$ under an additive Gaussian noise with standard deviation $\sigma = 0.025$, for some integers $K$ and $K'$. The observation vector at time $t_k$ is denoted $y_k$ and is made of the scalar observations made at the locations $o_l$, $l \in \{1,\dots,K'\}$. The corresponding likelihood function is
\eqns{
g(y_k \given x_k) = \prod_{l=1}^{K'} \dfrac{1}{\sqrt{2\pi}\sigma} \exp\Big( -\dfrac{1}{2 \sigma^2} \big( y_{k,l} - x^l_k \big)^2  \Big),
}
where $y_{k,l}$ is the $l$\textsuperscript{th} component of the vector $y_k$ and where $x^l_k$ is the value of the SPDE at time $k$ and location $o_l$, for any $l \in \{1,\dots,K'\}$. An example of the set of observations obtained on one solution of \cref{eq:SPDE} is given in \cref{fig:obsSPDE}. To ease the estimation procedure, the standard deviation of the observation noise is taken $4$ times bigger in the UMIMC and  MIMC recursions.

\begin{figure}
\thispagestyle{empty}
\centering
\includegraphics[width=.65\textwidth,trim=50pt 260pt 70pt 270pt, clip]{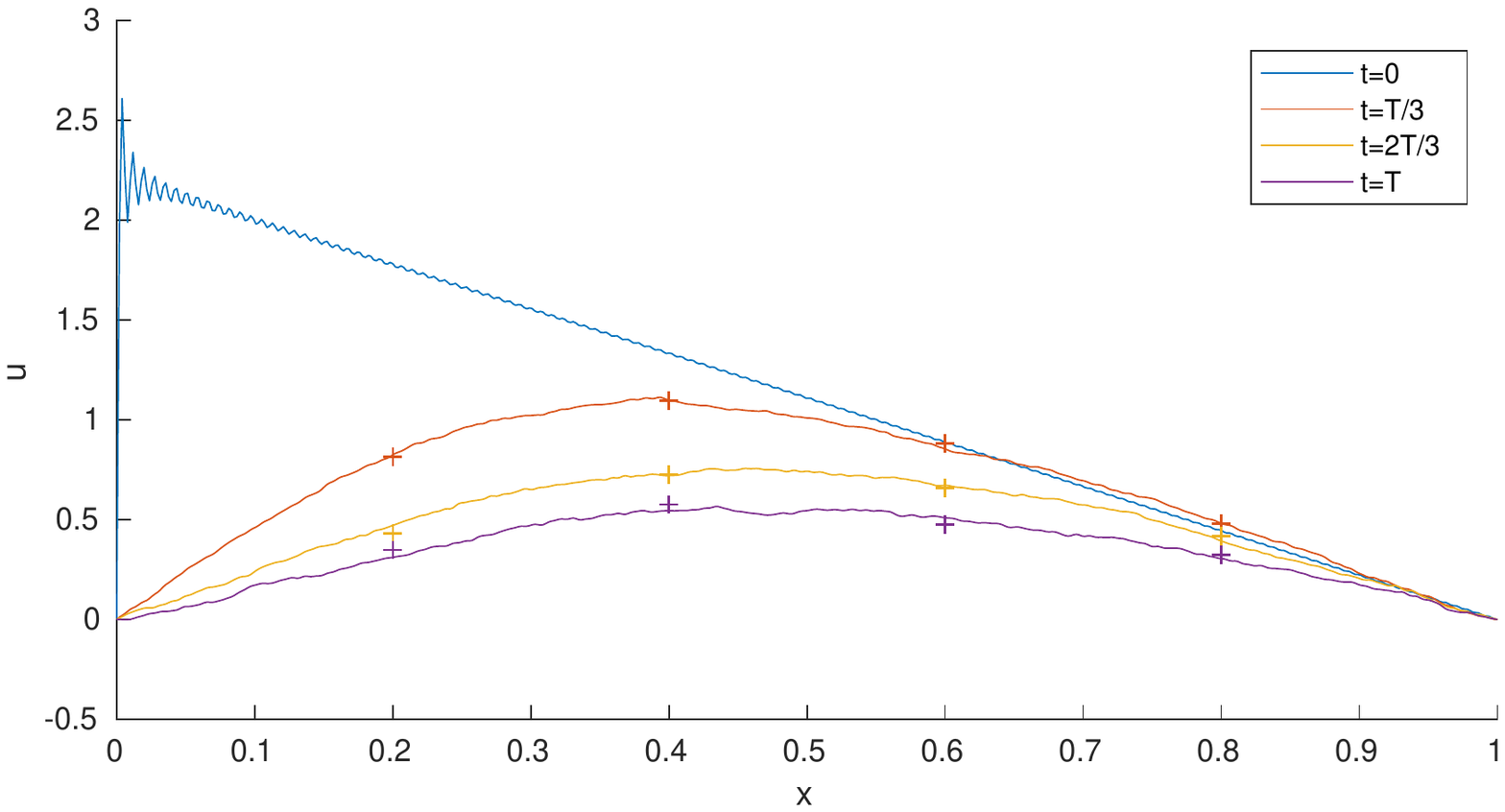}
\caption{Solution of \cref{eq:SPDE} for a given realisation of the Brownian motion, displayed at times $\{t_k\}_{k=1}^K$ with $K = 3$ and at locations $\{o_l\}_{l=1}^{K'}$ with $K' = 4$ and where the observations are indicated by crosses.}
\label{fig:obsSPDE}
\end{figure}

At index $\bsalpha \in \bbN^2_0$, this SPDE is solved using the exponential Euler scheme of \cite{Jentzen2009} with the first $2 \times 2^{\bsalpha_1}$ eigenfunctions and $2^{\bsalpha_2}$ time steps. The MIMC is used in the same way as in the previous section, but with a tolerance $\mathrm{TOL} = 5\times 10^{-3}$. The quantity to estimate is the integral of the true path at the last time step, so that
\eqns{
\bbE[S] = \dfrac{ \bbE\Big[ \varphi(X') \prod_{k = 1}^K g(y_k \given X'_k) \Big] }{ \bbE\Big[ \prod_{k = 1}^K g(y_k \given X'_k) \Big] },
}
where the expectation is w.r.t.\ to the path $X'$, where $\varphi(X')$ is the integral of the path $X'$ and where $X'_k$ is the vector containing the values of the path $X'$ at the locations $\{o_l\}_{l=1}^{K'}$ and at time $t_k$ for some $k \in \{1,\dots,K\}$. The experiments are run with $K = 3$ and $K' = 4$. The results displayed in \cref{fig:TOLvsWork2} show the same type of behaviour as in the simulation study of \cref{ssec:PDE} in spite of the differences between the two considered problems, e.g.\ in the nature of the approximation represented by the indices (spatial discretization in 2 dimensions vs.\ time discretization plus number of basis functions).

\begin{figure}
\thispagestyle{empty}
\centering
\begin{subfigure}[b]{0.72\textwidth}
\includegraphics[width=\textwidth,trim = 105pt 290pt 115pt 300pt, clip]{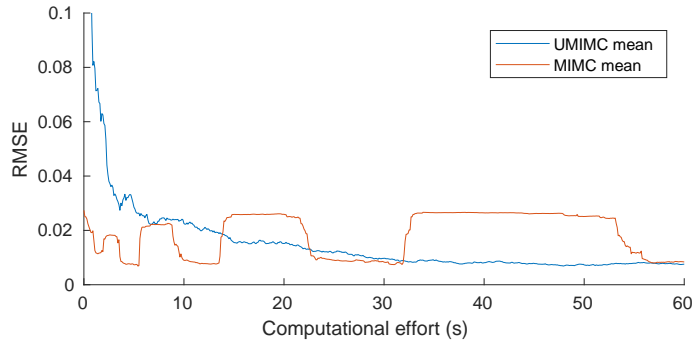}
\caption{Maximum level $(5,5)$}
\label{fig:TOLvsWork2_3}
\end{subfigure}
\begin{subfigure}[b]{0.72\textwidth}
\includegraphics[width=\textwidth,trim = 105pt 290pt 115pt 300pt, clip]{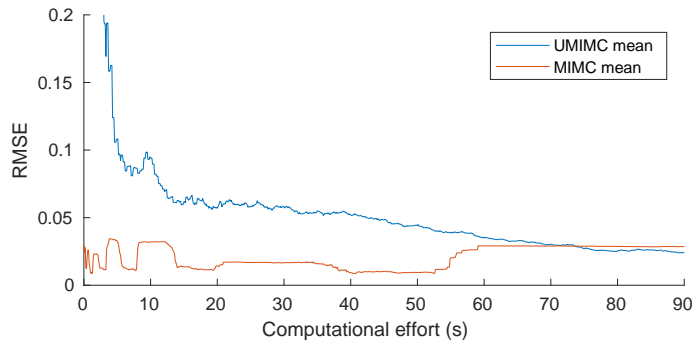}
\caption{Maximum level $(7,7)$}
\label{fig:TOLvsWork2_4}
\end{subfigure}
\caption{RMSE of the MIMC and UMIMC algorithms as a function of the computational effort for the problem~\cref{eq:SPDE}. The results are averaged over $50$ Monte Carlo runs.}
\label{fig:TOLvsWork2}
\end{figure}

\section{Summary}\label{sec:summary}

In this article, we have considered exact approximation of expectations associated to probability laws with discretizations in multiple dimensions. We have developed several optimality results and implemented the methodology to a couple of numerical examples. 

Future work associated to this methodology, includes combining our method in scenarios for which independent sampling from the (discretized) multi-index target is not possible. For instance, where one has to use Markov chain or sequential Monte Carlo methods (e.g.~\cite{besk:15} in the case of a single index). The analysis in such a scenario is of interest as is its application, to enhance the range of examples where our approach can be implemented. This is being conducted in \cite{jasra}.

\subsubsection*{Acknowledgements}

JH \& AJ were supported by an AcRF tier 2 grant: R-155-000-143-112. AJ is affiliated with the CQF, RMI and OR cluster, NUS. DC was partially supported by the EPSRC grant: EP/N023781/1.

\end{document}